% compile with pdflatex
\documentclass[copyright,creativecommons]{eptcs}
\usepackage{\jobname}

 % Name of the event you are submitting to
\usepackage{breakurl}             % Not needed if you use pdflatex only.

%\title{Complexity of the validity problem for propositional dependence logics}
\title{Complexity of validity for propositional dependence logics
%\footnote{This work was supported by the grant 266260 of the
%Academy of Finland and a grant by the Finnish Academy of Science and Letters.}
}
\author{Jonni Virtema
\institute{
School of Information Science\\
Japan Advanced Institute of Science and Technology\\
Nomi, Japan}
\institute{
School of Information Sciences\\
University of Tampere\\
Tampere, Finland}
%
%\institute{School of Information Sciences\\
%University of Tampere\\
%Tampere, Finland}
%
%
\email{jonni.virtema@uta.fi}}

\begin{document}

\maketitle

\begin{abstract}
We study the validity problem for propositional dependence logic, modal dependence logic and extended modal dependence logic. We show that the validity problem for propositional dependence logic is $\NEXPTIME$-complete. In addition, we establish that the corresponding problem for modal dependence logic and extended modal dependence logic is $\NEXPTIME$-hard and in $\NEXPTIME^\NP$.
\end{abstract}
%
%%%%%%%%%%%%%%%%%%%%%%%%%%%%%%%%%%%%%%%%%%%%%%%%%%%%%%%%%%%%%%%%%%%%%%%%%%%%%
\section{Introduction}
Dependencies occur in many scientific disciplines. For example, in physics there are dependencies in experimental data,
%in computer science dependencies are manifested in executions of discrete systems
and in social science they can occur between voting extrapolations.
For example, one might want to express whether a value of a certain physical measurement is determined by the values of some other measurements. More concretely, is it the case that in some collection of experimental data, the temperature of some object is completely determined by the solar activity and the distance between the object and the sun. 
One might also want to know whether the voting pattern of some single constituency always determines the election results.

With the aim to express such dependencies Väänänen introduced first-order \emph{dependence logic} \cite{va07} and its modal variant \emph{modal dependence logic} \cite{va08}. First-order dependence logic extends first-order logic by novel atomic formulae called \emph{dependence atoms}.
Modal dependence logic, in turn, extends modal logic with \emph{propositional dependence atoms}. A dependence atom, denoted by $\fdep{x_1,\dots,x_n,y}$, intuitively states that the value of the variable $y$ is solely determined by the values of the variables $x_1,\dots,x_{n}$. The intuitive meaning of the propositional dependence atom $\dep{p_1,\dots,p_n,q}$ is that the truth value of the proposition $q$ is functionally determined by the truth values of the propositions $p_1,\dots,p_{n}$.
One of the core ideas in these logics of dependence is the use of team semantics. Väänänen realized that dependencies do not manifest themselves in a single assignment nor in a single point. To manifest dependencies one must look at sets of assignment or collections of points. These sets of assignments or points are called teams.
Thus whereas in the standard semantics for first-order logic 
formulae are evaluated with respect to first-order models and assignments, in team semantics of dependence logic formulae are evaluated with respect to first-order models and sets of assignments. Analogously, in team semantics for modal logic formulae are evaluated with respect to Kripke models and sets of points.
For example, the formula
\[
\fdep{x_{\mathit{activity}}, x_{\mathit{dist}}, x_{\mathit{temp}}},
\]
where the values of the variables $x_\mathit{activity}$, $x_{\mathit{dist}}$, and $x_{\mathit{temp}}$ range over the magnitude of solar activity, distance to the sun, and temperature, respectively,
expresses that in some set of data the temperature is completely determined by the solar activity and the distance to the sun. Sets of data are captured by teams. Each assignment in a team corresponds to one record of data.

Team semantics was originally defined by Hodges \cite{Hodges97c} as a means to obtain compositional semantics for the independence-friendly logic of Hintikka and Sandu \cite{hisa89}. Later on Väänänen adopted team semantics as a central notion for his dependence logic.

Modal dependence logic was the first step in combining functional dependence and modal logic. The logic however lacks the ability to express temporal dependencies; there is no mechanism in modal dependence logic to express dependencies that occur between different points of the model. This is due to the restriction that only proposition symbols are allowed in the dependence atoms of modal dependence logic. To overcome this defect Ebbing et~al. \cite{EHMMVV13} introduced the
\emph{extended modal dependence logic} by extending the scope of dependence atoms to arbitrary modal formulae, i.e., dependence atoms in extended modal dependence logic are of the form $\dep{\varphi_1,\dots\varphi_n,\psi}$, where $\varphi_1,\dots,\varphi_n,\psi$ are formulae of modal logic. For example when interpreted in a temporal model, the formula
\[
\dep{\Diamond_P\, q, \Diamond_P\Diamond_P \,q, \Diamond_P\Diamond_P\Diamond_P \,q,\,q}
\]
expresses that the truth of $q$, at this moment, only depends of the truth of $q$ in the previous $3$ time steps.

It was shown in \cite{EHMMVV13} that extended modal dependence logic is strictly more expressive than modal dependence logic. Furthermore Hella et~al. \cite{HeLuSaVi14} established that exactly the properties of teams that are downward closed and closed under the so-called team $k$-bisimulation, for some finite $k$, are definable in extended modal dependence logic. The characterization of Hella et~al. truly demonstrates the naturality of extended modal dependence logic. In recent years the research around modal dependence logic has bloomed, for recent work see e.g.
\cite{EHMMVV13,EbbingL12,ebloya,galliani13,lohvo13,vollmer13,Sevenster:2009}.

Team semantics in propositional context is also closely related to the inquisitive logic of Groenendijk \cite{Groenendijk07}. In inquisitive logic the meaning of formulae is defined on sets of assignments for proposition
symbols. This connection between propositional dependence logic and inquisitive logic has already been noted in the recent Ph.D. thesis of Fan Yang \cite{fanthesis}. For resent work related to inquisitive logic, see e.g. \cite{Ciardelli11,sano11}.

In this paper we study the computational complexity of the validity problem for propositional dependence logic, modal dependence logic and extended modal dependence logic.
The study of computational complexity of the satisfiability problem and the model checking problem for logics of dependence has been very active.
For research related to fragments of first-order dependence logic and related formalisms see \cite{blassgure, jarmo, KoKuLoVi:2011, Graedel:2013, jonnithesis}. For work on variants of propositional and modal dependence logics see \cite{Sevenster:2009, EbbingL12, EHMMVV13, lohvo13, fanthesis}.
However, there is not much research done on the validity problem of these logics. We wish to mend this shortcoming. Note that since the logics of dependence are not closed under negation, the traditional connection between the satisfiability problem and the validity problem fails.
In this article we establish that the validity problem for propositional dependence logic is  $\NEXPTIME$-complete. In addition, we obtain that the corresponding problem for modal dependence logic and extended modal dependence logic is contained in $\NEXPTIME^\NP$.

The article is structured as follows. In section \ref{preli} we define the basic concepts and results relevant to this article. In section \ref{dqbf} we introduce a variant of $\QBF$, called dependency quantified Boolean formulae, for which the decision problem whether a given formula is true is $\NEXPTIME$-complete. We start Section \ref{ccpd} with compact definitions of satisfiability, validity and model checking in the context of team semantics. The rest of the section is devoted for the study of the complexity of the validity problem for propositional dependence logic. In Section \ref{ccml} we consider the validity problem of modal dependence logic and extended modal dependence logic.
%
%%%%%%%%%%%%%%%%%%%%%%%%%%%%%%%%%%%%%%%%%%%%%%%%%%%%%%%%%%%%%%%%%%%%%%%%%%%%% 
\section{Preliminaries}\label{preli}
In this section we define the basic concepts and results relevant to this article. We assume that the reader is familiar with propositional logic $\PL$ and modal logic $\ML$.

\subsection{Propositional logics}
Let $\mathbb{Z}_+$ denote the set of positive integers, and let $\mathrm{PROP}=\{p_i\mid i\in \mathbb{Z}_+\}$ be the set of exactly all \emph{proposition symbols}. Let $D$ be a finite, possibly empty, subset of $\mathrm{PROP}$. A function $s:D\to \{0,1\}$ is called an \emph{assignment}. A set $X$ of assignments $s:D\to \{0,1\}$ is called a \emph{propositional team}. The set $D$ is the \emph{domain} of $X$. Note that the empty team $\emptyset$ does not have a unique domain; any subset of $\mathrm{PROP}$ is a domain of the empty team.

Most of the logics considered in this article are not closed under negation, thus we adopt the convention that a syntax of a logic is always defined in negation normal form, i.e., negations are allowed only in front of proposition symbols. This convention is widely used in the dependence logic community. Formula that is not in negation normal form is regarded as a shorthand for the formula obtained by pulling all the negations to the atomic level.

Let $\Phi$ be a set of proposition symbols. The syntax for propositional logic $\PL(\Phi)$ is defined as follows.
\[
\varphi \ddfn p\mid \neg p \mid (\varphi \wedge \varphi) \mid (\varphi \vee \varphi),
\]
where $p\in\Phi$. We will now give the team semantics for propositional logic. 
As we will see below, the team semantics and the ordinary semantics for propositional logic defined via assignments, in a rather strong sense, coincide.
\begin{definition}
Let $\Phi$ be a set of atomic propositions and let $X$ be a propositional team. The satisfaction relation $X\models \varphi$ is defined as follows. Note that, we always assume that the proposition symbols that occur in $\varphi$ are also in the domain of $X$.
\begin{align*}
X\models p  \quad\Leftrightarrow\quad& \forall s\in X: s(p)=1.\\
X\models \neg p \quad\Leftrightarrow\quad& \forall s\in X: s(p)=0.\\
X\models (\varphi\land\psi) \quad\Leftrightarrow\quad& X\models\varphi \text{ and } X\models\psi.\\
X\models (\varphi\lor\psi) \quad\Leftrightarrow\quad& Y\models\varphi \text{ and } 
Z\models\psi,
\text{ for some $Y,Z$ such that $Y\cup Z= X$}.
\end{align*}
\end{definition}
\begin{proposition}[\cite{Sevenster:2009}]\label{PLflat}
Let $\varphi$ be a formula of propositional logic and let $X$ be a propositional team. Then 
\[
X\models \varphi \quad\text{ iff }\quad \forall s\in X: s\models_\mathit{PL} \varphi.
\]
Here $\models_\mathit{PL}$ refers to the ordinary satisfaction relation of propositional logic defined via assignments.
\end{proposition}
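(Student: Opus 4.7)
The plan is to prove the biconditional by structural induction on $\varphi$, exploiting the fact that team semantics for $\PL$-connectives was defined so as to mirror the classical clause when read pointwise, the one real subtlety being disjunction.

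For the base cases $\varphi = p$ and $\varphi = \neg p$, I would simply unfold the team-semantic definition: $X\models p$ means $\forall s \in X\colon s(p)=1$, which is exactly $\forall s \in X\colon s \models_{\mathit{PL}} p$, and analogously for $\neg p$. The conjunction case is immediate: $X \models \psi \land \chi$ iff $X \models \psi$ and $X \models \chi$; by the induction hypothesis this is equivalent to $(\forall s \in X\colon s \models_{\mathit{PL}} \psi)$ and $(\forall s \in X\colon s \models_{\mathit{PL}} \chi)$, which can be contracted to $\forall s \in X\colon s \models_{\mathit{PL}} \psi \land \chi$.

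The disjunction case is the one requiring actual thought, because team-semantic disjunction splits the team rather than evaluating pointwise. For the forward direction, assume $X \models \psi \lor \chi$ with witnessing cover $Y \cup Z = X$ where $Y \models \psi$ and $Z \models \chi$. By the induction hypothesis every $s \in Y$ satisfies $\psi$ classically and every $s \in Z$ satisfies $\chi$ classically, so each $s \in X$ belongs to $Y$ or $Z$ and hence classically satisfies $\psi \lor \chi$. For the converse, I would canonically define the cover $Y := \{s \in X \mid s \models_{\mathit{PL}} \psi\}$ and $Z := \{s \in X \mid s \models_{\mathit{PL}} \chi\}$. The hypothesis that every $s \in X$ classically satisfies $\psi \lor \chi$ guarantees $Y \cup Z = X$, and by the induction hypothesis $Y \models \psi$ and $Z \models \chi$, giving $X \models \psi \lor \chi$.

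The main (mild) obstacle is handling the empty team, which one should check does not break the induction: for $X = \emptyset$ the right-hand quantifier is vacuous, while on the left $\emptyset \models p$ and $\emptyset \models \neg p$ hold vacuously, and for disjunction one may take $Y = Z = \emptyset$, so the induction hypothesis still applies. With this observation the structural induction goes through and yields the stated equivalence.
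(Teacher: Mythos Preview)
Your structural-induction argument is correct and is exactly the standard proof of flatness for propositional team semantics; the base cases and conjunction are immediate, and your treatment of disjunction via the canonical cover $Y=\{s\in X\mid s\models_{\mathit{PL}}\psi\}$, $Z=\{s\in X\mid s\models_{\mathit{PL}}\chi\}$ is precisely how the nontrivial direction is handled.

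Note, however, that the paper does not actually give its own proof of this proposition: it is stated with a citation to Sevenster~(2009) and left unproved in the text. So there is no ``paper's proof'' to compare against here; your argument is the expected one and matches what one finds in the cited source and elsewhere in the literature.
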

The syntax of \emph{propositional dependence logic} $\PD(\Phi)$ is obtained by extending the syntax of $\PL(\Phi)$ by the grammar rule
\[
\varphi \ddfn \dep{p_1,\dots,p_n,q},
\]
where $p_1,\dots,p_n,q\in\Phi$. The intuitive meaning of the \emph{propositional dependence atom} $\dep{p_1,\dots,p_n,q}$ is that the truth value of the proposition symbol $q$ solely depends on the truth values of the proposition symbols $p_1,\dots,p_n$. The semantics for the propositional dependence atom is defined as follows:
\begin{align*}
X\models \dep{p_1,\dots,p_n,q} \quad\Leftrightarrow\quad& \forall s,t\in X: s(p_1)=t(p_1), \dots, s(p_n)=t(p_n)\\
&\text{implies that } s(q)=t(q).
\end{align*}
The next proposition is very useful. The proof is very easy and the result is stated, for example, in \cite{fanthesis}.
\begin{proposition}[Downwards closure]\label{dcprop}
Let $\varphi$ be a formula of propositional dependence logic and let $Y\subseteq X$ be propositional teams. Then 
$X\models \varphi$ implies $Y\models \varphi$.
\end{proposition}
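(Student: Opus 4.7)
The plan is to prove the downward closure property by structural induction on $\varphi$. All three atomic cases (positive literals, negative literals, and dependence atoms) together with the conjunction case are essentially immediate, so the only nontrivial step is the disjunction case, which requires a small construction.

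For the base cases, suppose $Y \subseteq X$. If $X \models p$, then every $s \in X$ satisfies $s(p)=1$, hence so does every $s \in Y$, giving $Y \models p$; the case $\neg p$ is symmetric. If $X \models \dep{p_1,\dots,p_n,q}$, then the universally quantified implication ``$s(p_i) = t(p_i)$ for all $i$ implies $s(q) = t(q)$'' holds for all pairs $s,t \in X$, and so it certainly holds for all pairs in the subset $Y$. For the conjunction, if $X \models \psi_1 \land \psi_2$ then $X \models \psi_1$ and $X \models \psi_2$; applying the induction hypothesis to each conjunct gives $Y \models \psi_1$ and $Y \models \psi_2$, hence $Y \models \psi_1 \land \psi_2$.

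The main step to get right is the disjunction. Suppose $X \models \psi_1 \lor \psi_2$. By the team semantics of $\lor$, there exist teams $X_1, X_2$ with $X_1 \cup X_2 = X$ such that $X_1 \models \psi_1$ and $X_2 \models \psi_2$. To witness $Y \models \psi_1 \lor \psi_2$, I would define $Y_1 \ddfn X_1 \cap Y$ and $Y_2 \ddfn X_2 \cap Y$. Then $Y_1 \cup Y_2 = (X_1 \cup X_2) \cap Y = X \cap Y = Y$, and since $Y_i \subseteq X_i$ for $i = 1,2$, the induction hypothesis applied to each $\psi_i$ yields $Y_1 \models \psi_1$ and $Y_2 \models \psi_2$. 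Hence $Y \models \psi_1 \lor \psi_2$, completing the induction.

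The only place the argument could plausibly go wrong is the disjunction case, since the semantics quantifies existentially over a split of the team and one must produce a split of the smaller team from a split of the larger one; the intersection construction above is the key idea and works precisely because unions distribute over intersections. No subtle point arises for the dependence atom itself, as its satisfaction condition is a universal statement over pairs in the team and is therefore preserved under taking subsets.
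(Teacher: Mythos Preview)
Your proof is correct; the structural induction with the intersection construction for the disjunction case is the standard argument and goes through exactly as you wrote it. The paper itself does not give a proof of this proposition at all---it merely remarks that the proof is very easy and cites \cite{fanthesis}---so there is no authorial argument to compare against, but your write-up is precisely the routine verification the paper is alluding to.
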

\subsection{Modal logics}
In this article, in order to keep the notation light, we restrict our attention to mono-modal logic, i.e., to modal logic with just two modal operators ($\Diamond$ and $\Box$). However this is not really a restriction, since the definitions, results, and proofs of this article generalize, in a straightforward manner, to handle also the poly-modal case.

Let $\Phi$ be a set of atomic propositions. The set of formulae for \emph{standard mono-modal logic} $\ML(\Phi)$ is generated by the following grammar
\[
\varphi \ddfn p\mid \neg p \mid (\varphi \wedge \varphi) \mid (\varphi \vee \varphi) \mid \Diamond \varphi \mid \Box \varphi,
\]
where $p\in\Phi$. Note that, since negations are allowed only in front of proposition symbols, $\Box$ and $\Diamond$ are \emph{not} interdefinable.
The syntax of \emph{modal logic with intuitionistic disjunction} $\ML(\varovee)(\Phi)$ is obtained by extending the syntax of $\ML(\Phi)$ by the grammar rule
\[
\varphi \ddfn (\varphi\varovee\varphi).
\]
The \emph{team semantics for modal logic} is defined via \emph{Kripke models} and \emph{teams}. In the context of modal logic, teams are subsets of the domain of the model.
\begin{definition}
Let $\Phi$ be a set of atomic proposition symbols. A \emph{Kripke model} $\mathrm{K}$ over $\Phi$ is a tuple $\mathrm{K} = (W, R, V)$, where $W$ is a nonempty set of \emph{worlds}, $R\subseteq W\times W$ is a binary relation, and $V\colon \Phi \to \mathcal{P}(W)$ is a \emph{valuation}. A subset $T$ of $W$ is called a \emph{team} of $\mathrm{K}$. Furthermore, define that
\begin{align*}
R[T] &:= \{w\in W \mid vRw \text{ holds for some }v\in T\},\\
R^{-1}[T] &:= \{w\in W \mid wRv \text{ holds for some }v\in T\}.
\end{align*}
For teams $T,S\subseteq W$, we write $T[R]S$ if $S\subseteq R[T]$ and $T\subseteq R^{-1}[S]$. Thus, $T[R]S$ holds if and only if for every $w\in T$ there exists some $v\in S$ such that $wRv$, and for every $v\in S$ there exists some $w\in T$ such that $wRv$.
\end{definition}
We are now ready to define the team semantics for modal logic and modal logic with intuitionistic disjunction. Similar to the case of propositional logic, the team semantics of modal logic, in a rather strong sense, coincides with the traditional semantics of modal logic defined via pointed Kripke models.
\begin{definition}%\label{semantics}
Let $\mathrm{K}$ be a Kripke model. The satisfaction relation $\mathrm{K},T\models \varphi$ for $\ML$ is defined as follows. 
\begin{align*}
\mathrm{K},T\models p  \quad\Leftrightarrow\quad& w\in V(p) \, \text{ for every $w\in T$.}\\
\mathrm{K},T\models \neg p \quad\Leftrightarrow\quad& w\not\in V(p) \, \text{ for every $w\in T$.}\\
\mathrm{K},T\models (\varphi\land\psi) \quad\Leftrightarrow\quad& \mathrm{K},T\models\varphi \text{ and } K,T\models\psi.\\
\mathrm{K},T\models (\varphi\lor\psi) \quad\Leftrightarrow\quad& \mathrm{K},T_1\models\varphi \text{ and } 
\mathrm{K},T_2\models\psi \,
\text{ for some $T_1,T_2$ such that $T_1\cup T_2= T$}.\\
\mathrm{K},T\models \Diamond\varphi \quad\Leftrightarrow\quad& \mathrm{K},T'\models\varphi \text{ for some $T'$ such that $T[R]T'$}.\\
\mathrm{K},T\models \Box\varphi \quad\Leftrightarrow\quad& \mathrm{K},T'\models\varphi, \text{ where $T'=R[T]$}.\\
\intertext{For $\ML(\varovee)$ we have the following additional clause:}
\mathrm{K},T\models (\varphi\varovee\psi) \quad\Leftrightarrow\quad& \mathrm{K},T\models\varphi \text{ or } \mathrm{K},T\models\psi.
\end{align*}
\end{definition}
\begin{proposition}[\cite{Sevenster:2009}]
Let $\varphi\in\ML$, $\mathrm{K}$ be a Kripke model and $T$ a team of $\mathrm{K}$. Then
\[
\mathrm{K},T\models \varphi \quad\text{ iff }\quad \forall w\in T:\mathrm{K},w\models_{\ML} \varphi.
\]
Here $\models_{\ML}$ refers to the ordinary satisfaction relation of modal logic defined via pointed Kripke models. 
\end{proposition}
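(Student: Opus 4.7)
The plan is to prove the equivalence by induction on the structure of $\varphi$, exploiting the analogous result for propositional logic (Proposition \ref{PLflat}) as a template. The atomic cases $p$ and $\neg p$ follow at once from the fact that the team-semantic clauses quantify universally over $w \in T$, which is exactly the pointwise requirement. The conjunction case is also immediate from the induction hypothesis, since both the team and pointwise semantics are componentwise.

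For the disjunction case, the right-to-left direction requires exhibiting a split: assuming every $w \in T$ pointwise satisfies $\varphi \lor \psi$, I would set $T_1 := \{w \in T \mid \mathrm{K}, w \models_\ML \varphi\}$ and $T_2 := T \setminus T_1$, so that every $w \in T_2$ pointwise satisfies $\psi$; applying the induction hypothesis to $T_1$ and $T_2$ gives $\mathrm{K}, T_1 \models \varphi$ and $\mathrm{K}, T_2 \models \psi$, and $T_1 \cup T_2 = T$ by construction. The left-to-right direction is easier: a split $T = T_1 \cup T_2$ with $\mathrm{K}, T_i \models$ the corresponding disjunct forces, via the induction hypothesis, every $w \in T$ to lie in $T_1$ or $T_2$ and hence to pointwise satisfy one of the disjuncts.

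For the box case, $\mathrm{K}, T \models \Box\varphi$ iff $\mathrm{K}, R[T] \models \varphi$, which by the induction hypothesis is equivalent to: every $v \in R[T]$ satisfies $\mathrm{K}, v \models_\ML \varphi$. Unpacking the definition of $R[T]$, this says precisely that every $w \in T$ has every $R$-successor satisfy $\varphi$ pointwise, matching the standard clause for $\Box$.

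The diamond case is the main obstacle and requires a successor-selection argument. One direction is straightforward: if $\mathrm{K}, T' \models \Diamond\varphi$ is witnessed by some $T'$ with $T[R]T'$, then by the induction hypothesis every $v \in T'$ pointwise satisfies $\varphi$, and the condition $T \subseteq R^{-1}[T']$ gives, for each $w \in T$, a successor $v \in T'$ with $\mathrm{K}, v \models_\ML \varphi$. Conversely, if every $w \in T$ pointwise satisfies $\Diamond\varphi$, I would use the axiom of choice to pick, for each such $w$, a successor $f(w) \in R[\{w\}]$ with $\mathrm{K}, f(w) \models_\ML \varphi$, and set $T' := \{f(w) \mid w \in T\}$. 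Then $T[R]T'$ holds (each $w \in T$ has $f(w) \in T'$ as successor, and each $v \in T'$ has some $w \in T$ with $wRv$ by construction), and every element of $T'$ pointwise satisfies $\varphi$; the induction hypothesis then yields $\mathrm{K}, T' \models \varphi$, completing the case. The empty-team subtlety is harmless throughout, since the induction hypothesis and all clauses degenerate correctly when $T = \emptyset$.
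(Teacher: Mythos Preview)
Your induction argument is correct in every case, including the diamond step where you correctly construct the witnessing team $T'$ via a choice of successors. Note, however, that the paper does not supply its own proof of this proposition: it is quoted as a result of Sevenster and stated without argument, so there is nothing in the paper to compare your approach against. Your structural induction is the standard proof of this ``flatness'' property for $\ML$ under team semantics and would be the expected argument in any detailed treatment.
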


The syntax for \emph{modal dependence logic} $\MDL(\Phi)$ is obtained by extending the syntax of $\ML(\Phi)$ by propositional dependence atoms
\[
\varphi\ddfn\dep{p_1,\dots, p_n,q},
\]
where $p_1,\dots,p_n,q\in \Phi$, whereas the syntax for \emph{extended modal dependence logic} $\EMDL(\Phi)$ is obtained by extending the syntax of $\ML(\Phi)$ by \emph{modal dependence atoms}
\[
\varphi\ddfn\dep{\varphi_1,\dots, \varphi_n,\psi},
\]
where $\varphi_1,\dots,\varphi_n,\psi$ are $\ML(\Phi)$-formulae. 

The intuitive meaning of the modal dependence atom $\dep{\varphi_1,\dots, \varphi_n,\psi}$ is that the truth value of the formula $\psi$ is completely determined by the truth values of the formulae $\varphi_1,\dots, \varphi_n$.
The semantics for these dependence atoms is defined as follows.
\begin{align*}
\mathrm{K},T\models \dep{\varphi_1,\dots,\varphi_n,\psi} \quad\Leftrightarrow\quad& \forall w,v\in T: \bigwedge_{i=1}^{n}(\mathrm{K},\{w\}\models\varphi_i \Leftrightarrow \mathrm{K},\{v\}\models\varphi_i)\\
& \text{implies }(\mathrm{K},\{w\}\models\psi\Leftrightarrow \mathrm{K},\{v\}\models\psi).
\end{align*}

The following proposition for $\MDL$ and $\ML(\varovee)$ is due to \cite{va08} and \cite{ebloya}, respectively. For $\EMDL$ it follows by the fact that $\EMDL$ translates into $\ML(\varovee)$, see \cite{EHMMVV13}.
\begin{proposition}[Downwards closure]\label{dcml}
Let $\varphi$ be a formula of $\ML(\varovee)$ or $\EMDL$, let $\mathrm{K}$ be a Kripke model and let $S\subseteq T$ be teams of $\mathrm{K}$. Then $\mathrm{K},T\models \varphi$ implies $\mathrm{K},S\models \varphi$.
\end{proposition}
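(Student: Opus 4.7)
The plan is to prove the statement for $\ML(\varovee)$ by straightforward structural induction on $\varphi$, and then invoke the translation of $\EMDL$ into $\ML(\varovee)$ from \cite{EHMMVV13} to transfer the result. Fix a Kripke model $\mathrm{K}$ and teams $S\subseteq T$; I would prove by induction the statement that $\mathrm{K},T\models\varphi$ implies $\mathrm{K},S\models\varphi$ uniformly for all subteams $S$.

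For the atomic cases $p$ and $\neg p$, the property is immediate because the semantic clause is a universal quantification over the team, which is preserved under taking subteams. The cases $\varphi\land\psi$ and $\varphi\varovee\psi$ follow directly from the induction hypothesis, since both clauses are pointwise in the team. The $\Box$ case is also easy: if $T'=R[T]$ satisfies $\varphi$, then $S':=R[S]\subseteq T'$, so by the induction hypothesis (applied to the subteam $S'$ of $T'$) we obtain $\mathrm{K},S'\models\varphi$, and this is precisely the clause for $\Box$ at $S$.

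The two slightly less trivial cases are the disjunction and the diamond, since they require reconstructing a witness for $S$ from a witness for $T$. For $\varphi\lor\psi$, suppose $T=T_1\cup T_2$ with $\mathrm{K},T_1\models\varphi$ and $\mathrm{K},T_2\models\psi$. Then set $S_i:=S\cap T_i$; this gives $S=S_1\cup S_2$ with $S_i\subseteq T_i$, and the induction hypothesis finishes the case. For $\Diamond\varphi$, suppose $T[R]T'$ and $\mathrm{K},T'\models\varphi$. For each $w\in S\subseteq T$, choose some $v_w\in T'$ with $wRv_w$ (which exists by $T[R]T'$), and set $S':=\{v_w\mid w\in S\}\subseteq T'$. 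Then $S[R]S'$ by construction, and the induction hypothesis gives $\mathrm{K},S'\models\varphi$.

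Finally, for $\EMDL$, the translation of \cite{EHMMVV13} produces, for every $\EMDL$-formula $\varphi$, an $\ML(\varovee)$-formula $\varphi^*$ such that $\mathrm{K},T\models\varphi$ iff $\mathrm{K},T\models\varphi^*$ for all Kripke models and teams, so the result transfers immediately. The main (mild) obstacle is the $\Diamond$ case, where one must notice that the subteam witness $S'$ can be built by choosing, for each point of $S$, a single $R$-successor inside the original witness $T'$; every other inductive step is either pointwise in the team or involves an obvious restriction. As a side remark, a direct induction on $\EMDL$ is equally feasible: the only additional case is the modal dependence atom, and its defining property (a universal condition over pairs in the team) is trivially preserved under subteams.
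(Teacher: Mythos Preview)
Your proof is correct. The paper does not actually give a proof of this proposition: it cites \cite{ebloya} for the $\ML(\varovee)$ case and then, exactly as you do, invokes the translation of \cite{EHMMVV13} to obtain the $\EMDL$ case. Your structural induction is the standard argument one expects the cited reference to contain, and all cases---including the $\Diamond$ case, where you correctly build the successor team $S'$ by picking one witness per point of $S$ inside $T'$---are handled properly. Your side remark that the dependence-atom clause is itself a universal condition over pairs, and hence could be handled directly, is also correct and would give an equally short self-contained proof for $\EMDL$.
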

The standard concept of bisimulation from modal logic can be lifted, in a straightforward manner, to handle team semantics. Below when stating that $\mathrm{K},w$ and $\mathrm{K},w'$ are bisimilar, we refer to the standard bisimulation of modal logic, for a definition see, e.g., \cite{blackburn}.
\begin{definition}
Let $\mathrm{K}$ and $\mathrm{K}'$ be Kripke models and let $T$ and $T'$ be teams of $\mathrm{K}$ and $\mathrm{K}'$, respectively. 
%We say that $K,T$ and $K',T'$ are \emph{team bisimilar} and write $K,T\Bisim K',T'$ if
%\begin{enumerate}
%\item for every $w\in T$ there exists some $w'\in T'$ such that $K,w\bisim K,w'$, and
%\item for every $w'\in T'$ there exists some $w\in T$ such that $K,w\bisim K,w'$.
%\end{enumerate}
We say that $\mathrm{K},T$ and $\mathrm{K}',T'$ are \emph{team bisimilar} if
\begin{enumerate}
\item for every $w\in T$ there exists some $w'\in T'$ such that $\mathrm{K},w$ and $\mathrm{K}',w'$ are bisimilar, and
\item for every $w'\in T'$ there exists some $w\in T$ such that $\mathrm{K},w$ and $\mathrm{K}',w'$ are bisimilar.
\end{enumerate}
\end{definition}

\begin{theorem}[\cite{HeLuSaVi14}]
If $\mathrm{K},T$ and $\mathrm{K}',T'$ are team bisimilar, then for every formula $\varphi\in\ML(\varovee)$ (and also for every $\varphi\in\EMDL$)
\[
\mathrm{K},T\models \varphi \quad\Leftrightarrow\quad  \mathrm{K}',T'\models \varphi.
\]
\end{theorem}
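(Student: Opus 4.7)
The plan is to prove the statement by structural induction on $\varphi$. Since \cite{EHMMVV13} provides a truth-preserving translation from $\EMDL$ into $\ML(\varovee)$, it suffices to establish the invariance for $\ML(\varovee)$ and to appeal to that translation for the $\EMDL$ part.

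For the base cases $\varphi = p$ and $\varphi = \neg p$, I would use that ordinary bisimilar pointed Kripke models agree on all proposition symbols, and combine this with the two clauses of team bisimilarity: if $\mathrm{K},T\models p$ and $w'\in T'$, pick $w\in T$ bisimilar to $w'$; then $w\in V(p)$, hence $w'\in V'(p)$. The converse direction is symmetric, and $\neg p$ is handled identically. The cases for $\wedge$ and the intuitionistic disjunction $\varovee$ are immediate from the induction hypothesis, since neither operator splits or modifies the team.

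The genuinely nontrivial cases are the splitjunction $\varphi\vee\psi$ and the diamond. For $\varphi\vee\psi$, given a witnessing split $T = T_1\cup T_2$, I would define $T_i' := \{w'\in T' \mid w' \text{ is bisimilar to some } w\in T_i\}$. Team bisimilarity of $T,T'$ forces $T' = T_1'\cup T_2'$, and each pair $T_i,T_i'$ is team bisimilar directly from the definition (the forth clause holds by construction of $T_i'$; the back clause holds because any $w\in T_i$ has a bisimilar partner in $T'$, which then belongs to $T_i'$). For $\Diamond\varphi$, given $T^*$ with $T[R]T^*$ and $\mathrm{K},T^*\models\varphi$, I would build $T^{*'}$ in two sweeps: for every $w'\in T'$, choose a bisimilar $w\in T$, then a successor $v\in T^*$ of $w$ (which exists since $T[R]T^*$), and finally, using the forth clause of the pointwise bisimulation $w\sim w'$, a successor $v'$ of $w'$ bisimilar to $v$; for every $v\in T^*$, choose a predecessor $w\in T$, a bisimilar $w'\in T'$, and again a successor $v'$ of $w'$ bisimilar to $v$. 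The union of all selected $v'$'s forms $T^{*'}$, and by construction $T'[R']T^{*'}$ and $T^*,T^{*'}$ are team bisimilar, so the induction hypothesis applies. The case $\Box\varphi$ is the cleanest: one verifies directly from the forth/back clauses of bisimilarity that $R[T]$ and $R'[T']$ are team bisimilar, then invokes the induction hypothesis.

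The main obstacle is the $\Diamond$ case, where the construction must simultaneously secure $T'[R']T^{*'}$ (which needs the first sweep, covering all of $T'$) and team bisimilarity between $T^*$ and $T^{*'}$ (which needs the second sweep, covering all of $T^*$). One must be careful that each $v'$ placed in $T^{*'}$ is genuinely witnessed by some $w'\in T'$ as its $R'$-predecessor, so both sweeps use only choices that are legitimized by the pointwise forth clause of bisimulation.
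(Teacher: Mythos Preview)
Your proposal is correct: the structural induction is the standard route, and each case is handled soundly. In particular, your two-sweep construction for $\Diamond$ is exactly what is needed---the first sweep secures $T'\subseteq R'^{-1}[T^{*'}]$, the second sweep secures the back direction of team bisimilarity between $T^*$ and $T^{*'}$, and both sweeps together guarantee $T^{*'}\subseteq R'[T']$ since every selected $v'$ is by construction an $R'$-successor of some $w'\in T'$. The $\vee$ and $\Box$ cases are also correct as stated.

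However, there is nothing to compare against: the paper does not supply its own proof of this theorem. It is stated with a citation to \cite{HeLuSaVi14} and used as a black box (chiefly to derive Corollary~\ref{disjoint unions} on preservation under disjoint unions). So your structural-induction argument is not a different route from the paper's---it is simply a proof where the paper gives none. One minor remark: rather than invoking the exponential translation from $\EMDL$ to $\ML(\varovee)$, you could handle the dependence-atom case directly in the induction, since $\dep{\varphi_1,\dots,\varphi_n,\psi}$ is evaluated pointwise via the $\ML$-formulae $\varphi_i,\psi$, and ordinary bisimilarity already guarantees agreement on those at individual worlds; this avoids the detour through an auxiliary syntax, though your approach is equally valid.
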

The following result is stated in \cite{fanthesis}. It also follows by a direct team bisimulation argument.
\begin{corollary}\label{disjoint unions}
Truth of $\ML(\varovee)$-formulae is preserved under taking disjoint unions, i.e., if $\mathrm{K}$ and $\mathrm{K}'$ are Kripke models, $T$ is a team of $\mathrm{K}$ and $\mathrm{K}\uplus \mathrm{K}'$ denotes the disjoint union of $\mathrm{K}$ and $\mathrm{K}'$ then
\[
\mathrm{K},T\models \varphi \quad\Leftrightarrow\quad \mathrm{K}\uplus \mathrm{K}',T\models \varphi,
\]
for every $\varphi\in\ML(\varovee)$.
\end{corollary}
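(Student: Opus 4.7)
The plan is to reduce the claim to the team bisimulation theorem stated just before the corollary by exhibiting a team bisimulation between $\mathrm{K},T$ and $\mathrm{K}\uplus\mathrm{K}',T$. Once this is done, the equivalence $\mathrm{K},T\models\varphi \Leftrightarrow \mathrm{K}\uplus\mathrm{K}',T\models\varphi$ for every $\varphi\in\ML(\varovee)$ is an immediate instance of that theorem.

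The first step is a purely modal observation: by the very definition of the disjoint union, for each world $w$ of $\mathrm{K}$ the set of $R$-successors of $w$ inside $\mathrm{K}\uplus\mathrm{K}'$ is precisely the set of $R$-successors of $w$ inside $\mathrm{K}$, and the valuation of every proposition symbol is preserved. Hence the worlds of $\mathrm{K}$ constitute a generated submodel of $\mathrm{K}\uplus\mathrm{K}'$, and the identity relation on the worlds of $\mathrm{K}$ is a standard modal bisimulation. Consequently, for every $w$ of $\mathrm{K}$ the pointed models $\mathrm{K},w$ and $\mathrm{K}\uplus\mathrm{K}',w$ are bisimilar in the ordinary sense from \cite{blackburn}.

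The second step is to lift this pointwise bisimilarity to the team level. Since $T\subseteq W$ and $T$ is also a team of $\mathrm{K}\uplus\mathrm{K}'$ (the very same set of worlds), the two conditions of the preceding definition of team bisimilarity are witnessed by the identity: for every $w\in T$ the choice $w':=w$ in the other model works by the first step, and the symmetric condition holds for the same reason. Therefore $\mathrm{K},T$ and $\mathrm{K}\uplus\mathrm{K}',T$ are team bisimilar, and the theorem immediately yields the conclusion.

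The argument is essentially bookkeeping; there is no genuine obstacle. The only point requiring a little care is the set-theoretic formalization of the disjoint union when $\mathrm{K}$ and $\mathrm{K}'$ happen to share worlds: in that case one tags worlds before taking the union, and one must either identify $T$ with its tagged copy inside $\mathrm{K}\uplus\mathrm{K}'$ or replace the identity by the obvious injection. Either choice preserves accessible successors and valuations, so the bisimulation step goes through verbatim.
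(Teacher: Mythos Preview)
Your proof is correct and is precisely the ``direct team bisimulation argument'' the paper alludes to (the paper does not spell out a proof, merely noting the result is stated in \cite{fanthesis} and follows by team bisimulation). Your care with the tagging of worlds in the disjoint union is appropriate; nothing is missing.
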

%
%%%%%%%%%%%%%%%%%%%%%%%%%%%%%%%%%%%%%%%%%%%%%%%%%%%%%%%%%%%%%%%%%%%%%%%%%%
%
\section{Dependency quantified Boolean formulae}\label{dqbf}
Deciding whether a given quantified Boolean formula is true is a canonical $\PSPACE$-complete problem. Dependency quantified Boolean formulae introduced by Peterson et~al. \cite{Peterson2001} are variants of quantified Boolean formulae for which the corresponding decision problem is $\NEXPTIME$-complete. In this section we give a definition of quantified Boolean formulae and dependency quantified Boolean formulae suitable for our needs. 

A \emph{Boolean variable} is a variable that is assigned either true or false. Let $\mathrm{BVAR}=\{\gamma_i \mid i\in \mathbb{Z}_+\}$ be the set of exactly all Boolean variables. \emph{Boolean formulae} $\varphi$ are a built from Boolean variables by the following grammar:
\[
\varphi \ddfn \alpha \mid \neg\alpha \mid (\varphi \wedge \varphi) \mid (\varphi \vee \varphi),
\] 
where $\alpha \in \mathrm{BVAR}$.
A formula
\[
\psi = Q_1\alpha_1 Q_2 \alpha_2\dots Q_n\alpha_{n}\varphi,
\]
where $Q_i\in \{\forall,\exists\}$, for each $i\leq n$, is called a \emph{quantified Boolean formula}, if $\varphi$ is a Boolean formula and $\psi$ does not have free variables. We let $\QBF$ denote the set of all quantified Boolean formulae. Semantics for Boolean formulae and quantified Boolean formulae is defined via assignments $s:\mathrm{BVAR}\to \{0,1\}$ in the obvious way.
We define that
\[
\TQBF=\{\varphi\in \QBF \mid \varphi \text{ is true}\}.
\]
\begin{theorem}[\cite{Stockmeyer:1973}]
The membership problem of $\TQBF$ is $\PSPACE$-complete.
\end{theorem}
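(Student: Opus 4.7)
The plan is to prove both $\PSPACE$-membership and $\PSPACE$-hardness separately. Containment follows from a standard recursive evaluation procedure. Given a closed $\psi = Q_1\alpha_1 \dots Q_n\alpha_n\varphi$, I would decide the truth of each suffix $Q_i\alpha_i \dots Q_n\alpha_n\varphi$ relative to the partial assignment of $\alpha_1,\dots,\alpha_{i-1}$ fixed so far by branching on the two possible values of $\alpha_i$, taking the conjunction of the two recursive outcomes when $Q_i = \forall$ and the disjunction when $Q_i = \exists$. The base case, a quantifier-free Boolean formula under a total assignment, is evaluated trivially. At any moment the algorithm holds only a pointer into $\psi$ together with a partial assignment of size at most $n$, so the space usage is polynomial in $|\psi|$.

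For hardness, I would reduce an arbitrary language $L\in\PSPACE$ to $\TQBF$. Fix a deterministic Turing machine $M$ deciding $L$ in space $p(n)$. On input $w$ of length $n$, configurations of $M$ can be encoded by bit-strings of length $\ell = O(p(n))$, and one constructs, in logarithmic space, a quantifier-free Boolean formula $\mathrm{Yields}(\vec{c}, \vec{c}\,')$ of polynomial size asserting that $\vec{c}$ yields $\vec{c}\,'$ in one step of $M$. Since $M$ has at most $2^{O(\ell)}$ distinct configurations, $w\in L$ iff the initial configuration $\vec{c}_0$ reaches some accepting configuration in at most $2^{O(\ell)}$ steps.

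The central step, and the main technical obstacle, is to express the predicate \emph{reach in $2^k$ steps} by a QBF of size polynomial in $k$ and $\ell$; naive unfolding would double the formula at every level. The standard remedy is a Savitch-style trick: inductively define $\varphi_0(\vec{c}, \vec{c}\,')$ to say that $\vec{c}=\vec{c}\,'$ or $\mathrm{Yields}(\vec{c}, \vec{c}\,')$ holds, and set
\[
\varphi_{k+1}(\vec{c}, \vec{c}\,') \;:=\; \exists \vec{c}\,'' \, \forall \vec{d}\, \forall \vec{e}\, \Bigl(\bigl((\vec{d},\vec{e}) = (\vec{c},\vec{c}\,'') \vee (\vec{d},\vec{e}) = (\vec{c}\,'',\vec{c}\,')\bigr) \to \varphi_k(\vec{d},\vec{e})\Bigr).
\]
The universal quantification over the ``choice of which half of the path'' is what prevents the size from doubling, and each level contributes only $O(\ell)$ additional symbols. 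The reduction then outputs the closed QBF $\exists \vec{c}_{\mathrm{acc}}\,\bigl(\mathrm{Acc}(\vec{c}_{\mathrm{acc}}) \wedge \varphi_{m}(\vec{c}_0, \vec{c}_{\mathrm{acc}})\bigr)$ with $m = O(\ell)$ chosen so that $2^m$ exceeds the total number of configurations, yielding a formula of size polynomial in $n$ that is true iff $M$ accepts $w$. This establishes $\PSPACE$-hardness and completes the proof.
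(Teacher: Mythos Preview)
Your proof is correct and follows the standard textbook argument (recursive evaluation for membership, Savitch-style configuration reachability for hardness). Note, however, that the paper does not actually prove this theorem: it is stated with a citation to Stockmeyer and Meyer and used as a black box, so there is no ``paper's own proof'' to compare against. Your argument supplies exactly the classical proof one would expect behind that citation.
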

We call a formula
\[
\psi = \forall \alpha_1 \dots \forall\alpha_{n}\exists \beta_1\dots \exists \beta_k\varphi
\]
a \emph{simple quantified Boolean formula}, if $\varphi$ is a Boolean formula, $\psi$ does not have free variables and each variable quantified in $\psi$ is quantified exactly once. Let $P_1,\dots,P_k\subseteq \{\alpha_1,\dots,\alpha_n\}$. We call the tuple $(P_1,\dots,P_k)$ a \emph{constraint} for $\psi$. If $P_1\subseteq P_2\subseteq\dots \subseteq P_k$, we call the constraint \emph{simple}. The idea here is that, for each $i\leq k$, the value assigned for the existentially quantified Boolean variable $\beta_i$ may only depend on the values given to the universally quantified Boolean variables in the set $P_i$. Thus, the intuition is that the simple quantified Boolean formula
\[
\forall \alpha_1 \forall \alpha_2 \exists\beta_1 \exists \beta_2 \theta
\]
is true under the constraint $(\{\alpha_1\}, \{\alpha_2\})$, if $\theta$ can be made true such that the dependencies $\dep{\alpha_1,\beta_1}$ and $\dep{\alpha_2,\beta_2}$ hold. The formal definition is given below.
\begin{definition}
Let $\psi = \forall \alpha_1 \dots \forall\alpha_{n}\exists \beta_1\dots \exists \beta_k\varphi$ be a simple quantified Boolean formula and $(P_1,\dots,P_k)$ a constraint for $\psi$. We say that $\psi$ is \emph{true under the constraint} $(P_1,\dots,P_k)$, if there exists a function $f_i:\{0,1\}^{\lvert P_i\rvert}\to \{0,1\}$, for each $i\leq k$, such that for each assignment $s:\{\alpha_1,\dots,\alpha_n\}\to\{0,1\}$
\[
%s'\models \varphi\Big(f_1\big(s(P_1)\big)/\beta_1, \dots, f_k\big(s(P_k)\big)/\beta_k\Big),
s'\models \varphi,
\]
where $s'$ is the modified assignment defined as follows: 
\[
s'(\alpha):=
\begin{cases}
f_i\big(s(P_i)\big) &\text{if } \alpha = \beta_i \text{ and } i\leq k,\\
s(\alpha) &\mbox{otherwise.}
\end{cases}
\]
Here $s(P_i)$ is a shorthand notation for $\big(s(\gamma_{i_1}), \dots, s(\gamma_{i_t})\big)$, where $\gamma_{i_1}, \dots, \gamma_{i_t}$ are exactly the Boolean variables in $P_i$ ordered such that $i_j<i_{j+1}$, for each $j< t$.
\end{definition}
It is easy to see that there is a close connection between quantified Boolean formulae and simple quantified Boolean formulae with simple constraints; there exists a polynomial time computable function $F$ that associates each quantified Boolean formula to an equivalent simple quantified Boolean formula with a simple constraint, and vice versa. The equivalent quantified Boolean formula is obtained from a simple quantified Boolean formula with a simple constraint by reordering the quantification of variables. The constraint determines the order of quantifiers.

We define that a dependency quantified Boolean formula is a pair $(\psi, \vec P)$ where $\psi$ is a simple quantified Boolean formula and $\vec{P}$ is a constraint for $\psi$. We let $\DQBF$ denote the set of all dependency quantified Boolean formulae. We define that
\[
\TDQBF=\{(\psi, \vec P)\in \DQBF \mid \psi\text{ is true under the constraint }\vec P\}.
\]
\begin{theorem}[\cite{Peterson2001}]\label{TDQBF problem}
The membership problem of $\TDQBF$ is $\NEXPTIME$-complete.
\end{theorem}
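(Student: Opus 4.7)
The plan is to prove the two directions separately. For the $\NEXPTIME$ upper bound, the key observation is that each Skolem function $f_i : \{0,1\}^{|P_i|} \to \{0,1\}$ admits a truth-table representation of size at most $2^n$, so the tuple $(f_1, \dots, f_k)$ can be written down in space $k \cdot 2^n$, which is exponential in the size of the input. I would design a nondeterministic exponential-time algorithm that first guesses such truth tables for $f_1,\dots,f_k$, and then verifies the guess by iterating through every assignment $s : \{\alpha_1,\dots,\alpha_n\} \to \{0,1\}$, computing the modified assignment $s'$ via $k$ table lookups, and evaluating the quantifier-free matrix $\varphi$ under $s'$. The outer loop has $2^n$ iterations and each iteration runs in time polynomial in the guessed certificate, so the whole procedure runs in $\NEXPTIME$.

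For $\NEXPTIME$-hardness I would reduce from the acceptance problem of a nondeterministic Turing machine $M$ running in time $T = 2^{p(|x|)}$ on input $x$. The standard encoding views an accepting computation as a $T \times T$ tableau whose cell contents come from a fixed finite alphabet. The idea is to use the Skolem functions as succinct codes for this exponentially large tableau. I would introduce two blocks of universal variables $\vec{\alpha}_t$ and $\vec{\alpha}_p$, each of length $p(|x|)$, whose joint assignment addresses a cell $(t,p)$ of the tableau. For each bit of a cell's content I would add an existential variable $\beta$ whose constraint set is exactly $\vec{\alpha}_t \cup \vec{\alpha}_p$; this forces $\beta$ to be a function of the address and hence to realise a well-defined cell content.

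The matrix $\varphi$ would then express the usual local consistency of a Turing machine computation: the first row agrees with the initial configuration of $M$ on $x$; some row is accepting; and for every $(t,p)$, the content at $(t+1,p)$ is a legal outcome of the transition function applied to the contents at $(t,p-1),(t,p),(t,p+1)$. Since $\varphi$ must refer to four cells simultaneously while the universal variables only address one, I would introduce extra blocks of existential variables to sample the contents of the three neighbours of the addressed cell, giving each auxiliary variable a constraint set that forbids it from seeing any universal bits other than those naming the neighbour it is responsible for. The whole construction is polynomial in $|x|$, since the number of variables and the size of $\varphi$ are polynomial in $p(|x|)$.

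The main obstacle is the careful design of the dependency sets. Unlike ordinary $\QBF$, where existentials may depend on every preceding universal and hence on the full address of every neighbour, the reduction must assign to each auxiliary existential a constraint set that uniquely identifies the single cell it names, so that consistency between independently sampled neighbours is enforced rather than trivialised. This is the very feature that separates $\DQBF$ from $\QBF$ and explains the jump from $\PSPACE$ to $\NEXPTIME$; getting the address bookkeeping right is the heart of the proof. Once the addressing gadget is in place, soundness (from an accepting computation one reads off Skolem functions) and completeness (from satisfying Skolem functions one extracts an accepting run by induction on the time coordinate) are routine.
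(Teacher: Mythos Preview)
The paper does not prove this theorem; it simply quotes the result of Peterson, Reif and Azhar and uses it as a black box. So there is no ``paper's own proof'' to compare against, and your sketch must be judged on its own.

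Your $\NEXPTIME$ upper bound is fine: guessing the truth tables of the Skolem functions and then sweeping through all $2^{n}$ universal assignments is exactly the standard argument.

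The hardness direction, however, has a real gap. With a single universal address block $(\vec{\alpha}_t,\vec{\alpha}_p)$ you cannot create auxiliary existential variables that ``see only the bits naming the neighbour'': the address of a neighbouring cell such as $(t{+}1,p)$ or $(t,p{-}1)$ is an \emph{arithmetic function} of $(\vec{\alpha}_t,\vec{\alpha}_p)$, not a subset of those bits, and in $\DQBF$ the constraint set $P_i$ of an existential variable must literally be a subset of the universal variables. So, as written, your auxiliary existentials either depend on the full address (and then all four cell-contents collapse to the same Skolem value), or on a proper subset of the address bits (which does not name any neighbour at all).

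The usual repair---and this is essentially what Peterson, Reif and Azhar do---is to introduce \emph{several} universal address blocks $\vec{u}^{(0)},\vec{u}^{(1)},\vec{u}^{(2)},\vec{u}^{(3)}$, one per cell you need to inspect, and one copy $\vec{\beta}^{(j)}$ of the cell-content existentials for each block, with $P^{(j)}=\vec{u}^{(j)}$. The matrix then has two kinds of clauses: (i) a window clause ``if $\vec{u}^{(0)},\dots,\vec{u}^{(3)}$ happen to be the four addresses of a legal window, then $\vec{\beta}^{(0)},\dots,\vec{\beta}^{(3)}$ satisfy the transition relation'', and (ii) consistency clauses ``if $\vec{u}^{(i)}=\vec{u}^{(j)}$ then $\vec{\beta}^{(i)}=\vec{\beta}^{(j)}$''. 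Clause (ii) is what forces all copies to compute the \emph{same} tableau function; without it the different copies could encode unrelated tableaux and the reduction would be unsound. Once you add the extra universal blocks and the equality clauses, soundness and completeness go through exactly as you describe. Your closing remark that ``getting the address bookkeeping right is the heart of the proof'' is correct; the missing piece is precisely this duplication of the universal address and the accompanying consistency check.
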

%
%
%
%%%%%%%%%%%%%%%%%%%%%%%%%%%%%%%%%%%%%%%%%%%%%%%%%%%%%%%%%%%%%%%%%%%%%%%%%%%%%%%%%%%%
\section{Computational complexity of propositional dependence logics}\label{ccpd}
%%%%%%%%%%%%%%%%%%%%%%%%%%%%%%%%%%%%%%%%%%%%%%%%%%%%%%%%%%%%%%%%%%%%%%%%%%%%%%%%%%%
%
Computational complexity of the satisfiability problem and the model checking problem for variants of propositional and modal dependence logics have been thoroughly studied, see e.g., \cite{Sevenster:2009, EbbingL12, EHMMVV13, lohvo13, Graedel:2013, fanthesis}. However there is not much research done on the validity problem of these logics. Note that since the logics of dependence are not closed under negation the traditional connection between the satisfiability problem and the validity problem fails.
\subsection{Satisfiability, validity and model checking in team semantics}
We start by defining satisfiability and validity in the context of team semantics.

A formula $\varphi$ of propositional dependence logic is said to be \emph{satisfiable}, if there exists a propositional team $X$ such that $X\models\varphi$.
A formula $\varphi$ of propositional dependence logic is said to be \emph{valid}, if $X\models\varphi$ holds for all teams $X$ such that the proposition symbols of $\varphi$ are in the domain of $X$.
Analogously, a formula $\psi$ of $\EMDL$ \big(or $\ML(\varovee)$\big) is said to be \emph{satisfiable}, if there exists a Kripke model $\mathrm{K}$ and a team $T$ of $\mathrm{K}$ such that $\mathrm{K},T\models\psi$.
A formula $\psi$ of $\EMDL$ \big(or $\ML(\varovee)$\big) is said to be \emph{valid}, if $\mathrm{K},T\models\psi$ holds for every Kripke model $\mathrm{K}$ (such that the proposition symbols in $\psi$ are mapped by the valuation of $\mathrm{K}$) and every team $T$ of $\mathrm{K}$.

The satisfiability problem and the validity problem for these logics is defined in the obvious manner. Given a binary encoding of a formula of a given logic, decide whether the formula is satisfiable (valid, respectively). The variant of the model checking problem, we are concerned in this article is the following.
Given binary encodings of a formula $\varphi$ of propositional dependence logic and of a (finite) propositional team $X$, decide whether $X\models\varphi$. The corresponding problem for modal logics is defined as follows. Given binary encodings of a formula $\psi$ of $\EMDL$ \big(or $\ML(\varovee)$\big), of a finite Kripke model $\mathrm{K}$ and of a team $T$ of $\mathrm{K}$, decide whether $\mathrm{K},T\models\psi$.

\subsection{The validity problem of propositional dependence logic}\label{valpd}
The complexity of the satisfiability problem for $\PL$ and $\PD$ is known to coincide; both are $\NP$-complete. The result for $\PL$ is due to Cook \cite{co71} and Levin \cite{Lev1973}. For $\PD$, the $\NP$-hardness follows directly from the result of Cook and Levin, and the inclusion to $\NP$ follows from the work of Lohmann and Vollmer \cite{lohvo13}. 

A natural question then arises: Is there a similar connection between the validity problem of $\PL$ and that of $\PD$?
Since the syntax of propositional logic is closed under taking negations, it follows that the validity problem for $\PL$ is $co\NP$-complete. However, since the syntax of propositional dependence logic is not closed under taking negations, the corresponding connection between the satisfiability problem and the validity problem of $\PD$ fails. This indicates that there might not be any direct connection between the validity problem of $\PL$ and that of $\PD$. In fact, as we will see, the validity problem for $\PD$ is much harder than the corresponding problem for $\PL$.
Surprisingly, we are able to show that the validity problem for $\PD$ is $\NEXPTIME$-complete.

We shall first show that the validity problem for $\PD$ is in $\NEXPTIME$. To that end, we use the following result concerning the model checking problem of $\PD$.

\begin{theorem}[\cite{EbbingL12}]\label{pdmcnp}
The model checking problem for $\PD$ is $\NP$-complete.
\end{theorem}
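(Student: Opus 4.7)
The plan is to prove both NP-membership and NP-hardness for the model checking problem of $\PD$.

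For NP-membership, I would exploit the observation that the only inherently nondeterministic clause in the team-semantic definition is the one for $\vee$, which requires exhibiting a partition $X = Y \cup Z$ of the current team. The clause for $\wedge$ recurses on the same team, and literals together with dependence atoms $\dep{p_1,\dots,p_n,q}$ are decidable in deterministic polynomial time once the team is fixed (the dep atom is verified by scanning all pairs of assignments in the current subteam). I would therefore describe a nondeterministic algorithm that walks the parse tree of $\varphi$ from the root, passing the current subteam down; at every $\vee$-node it guesses for each assignment in the current subteam whether that assignment goes to the left child, to the right child, or to both, then recurses on each child with the appropriate subteam. Since $\varphi$ has $O(\lvert\varphi\rvert)$ disjunction nodes, the total information guessed is bounded by $O(\lvert X\rvert \cdot \lvert\varphi\rvert)$, which is polynomial in the input. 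Verifying each leaf after the guesses is polynomial, so the whole procedure runs in $\NP$.

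For NP-hardness, I would reduce from $3\textsc{SAT}$. Given a $3$CNF formula $\chi(x_1,\dots,x_n)$, I would construct a team $X$ over a suitable set of proposition symbols (containing $x_1,\dots,x_n$ plus some auxiliary tags) of size polynomial in $n$, together with a $\PD$-formula $\psi$, such that $X \models \psi$ iff $\chi$ is satisfiable. The strategy is to use a disjunction $(\alpha \vee \beta)$ to split $X$ so that a designated subteam encodes a chosen truth assignment to $x_1,\dots,x_n$; the dependence atoms $\dep{x_i}$ applied to the pieces of this split force each $x_i$ to take a single value within that piece, while the clauses of $\chi$ are asserted on that subteam using ordinary $\PL$-combinators (which are flat by Proposition \ref{PLflat}). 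The auxiliary tags on $X$ are used to force any satisfying split to actually hit every relevant "case" and thus to commit to a well-defined assignment, rather than escaping into trivial subteams.

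The main obstacle I anticipate is precisely this NP-hardness direction. Because $\PD$ is downwards closed (Proposition \ref{dcprop}) and satisfies the empty-team property, any naive reduction along the lines of "guess a non-empty satisfying subteam" collapses: the empty split is always available. The reduction must therefore exploit the interplay between $\vee$-splitting and dep atoms so that non-triviality is forced by the structure of $X$ itself, for example by tagging assignments in such a way that every piece of a legal split must contain certain prescribed elements. I would spend most of the effort verifying that the team $X$ rules out degenerate splits and that the splits corresponding to satisfying assignments of $\chi$ are exactly the ones witnessing $X \models \psi$.
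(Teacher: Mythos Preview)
The paper does not prove this theorem at all; it is quoted as a known result from \cite{EbbingL12} and used as a black box in the proof of Lemma~\ref{pdvalin}. There is therefore no argument in the paper to compare your proposal against, and I can only assess your plan on its own merits.

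Your $\NP$-membership argument is correct and is the standard one: the only existential branching in the team semantics of $\PD$ is at $\vee$, a split of the current team can be recorded with $O(\lvert X\rvert)$ bits per $\vee$-node, and literals and dependence atoms are checkable in deterministic polynomial time on a given subteam. This part needs no further work.

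The $\NP$-hardness direction, however, is not yet a proof. You have correctly located the obstacle---downward closure and the empty-team property mean that any split may collapse one side to $\emptyset$---but you have not exhibited a team $X$ and a formula $\psi$ that defeat it. Saying that ``auxiliary tags force any satisfying split to hit every relevant case'' is a statement of intent, not a construction: you have not said what the tags are, what the complementary disjunct $\beta$ must satisfy, or why no split can dump all of $X$ into $\beta$. Concretely, asserting $\bigwedge_i \dep{x_i}\wedge\chi$ on one side of a split constrains that side to encode a single assignment satisfying $\chi$, but nothing you wrote prevents that side from being empty. The reduction in \cite{EbbingL12} (for the modal case, which specialises to $\PD$) succeeds precisely because both halves of the top-level split are forced to do nontrivial work by the structure of the team; until you write down a comparable construction and verify the two directions of the equivalence, the hardness argument remains a sketch.
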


Let $D$ be a finite set of proposition symbols. By $X_{\max D}$ we denote the set of all assignments $s:D \to \{0,1\}$. The following lemma follows directly from the fact that $\PD$ is downward closed, i.e., Proposition \ref{dcprop}.
\begin{lemma}\label{bigteam}
Let $\varphi$ be a formula of $\PD$ and let $D$ be the set of proposition symbols occurring in $\varphi$. Then
$\varphi$ is valid if and only if $X_{\max D}\models \varphi$.
\end{lemma}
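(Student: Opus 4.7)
The plan is to prove both directions. The forward direction is immediate: by the definition of validity, $\varphi$ holds on every team whose domain contains the set $D$ of its proposition symbols, and $X_{\max D}$ is such a team.

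For the converse, assume $X_{\max D}\models\varphi$ and let $Y$ be an arbitrary team with domain $D'\supseteq D$. Write $Y|_D := \{s|_D \mid s\in Y\}$ for the pointwise restriction. Every element of $Y|_D$ is an assignment $D\to\{0,1\}$ and therefore lies in $X_{\max D}$, so $Y|_D\subseteq X_{\max D}$. Proposition \ref{dcprop} (downward closure) then yields $Y|_D\models\varphi$. To finish, one needs the locality observation $Y\models\varphi \Leftrightarrow Y|_D\models\varphi$, bridging the two teams, which I would prove as a separate auxiliary claim.

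The locality claim is: for every $\PD$-formula $\psi$ whose proposition symbols all lie in $D$ and every team $Y$ with domain containing $D$, $Y\models\psi$ iff $Y|_D\models\psi$. I would establish this by a straightforward induction on $\psi$. Literals and propositional dependence atoms only inspect values on propositions in $D$, so the truth condition is identical for $Y$ and $Y|_D$. Conjunction is immediate from the induction hypothesis. For disjunction $\theta_1\lor\theta_2$, the nontrivial direction is the lift: given a split $Y|_D = Z_1\cup Z_2$ with $Z_i\models\theta_i$, define $Y_i := \{s\in Y \mid s|_D\in Z_i\}$; then $Y_1\cup Y_2 = Y$ and $(Y_i)|_D = Z_i$, so the inductive hypothesis gives $Y_i\models\theta_i$ and hence $Y\models\theta_1\lor\theta_2$. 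Combining locality with the inclusion obtained above yields $Y\models\varphi$, so $\varphi$ is valid.

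The only genuine obstacle is the disjunction case of the locality induction, where one must reconstruct a split of $Y$ from a split of the restricted team $Y|_D$ using preimages under restriction; all other steps reduce to downward closure together with the fact that $\varphi$'s propositions are confined to $D$, which is exactly the informal content of the author's remark that the lemma follows from downward closure.
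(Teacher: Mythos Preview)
Your proposal is correct and follows essentially the route the paper intends: the paper gives no explicit proof but simply states that the lemma follows directly from downward closure (Proposition~\ref{dcprop}), and your argument unpacks exactly this. You are right that a locality step is implicitly needed to handle teams whose domain strictly contains $D$, and your inductive proof of locality is sound; in particular, your treatment of the disjunction case via preimages $Y_i := \{s\in Y \mid s|_D\in Z_i\}$ is the standard way to lift a split back and works as written.
\par
The only remark worth making is that the paper treats locality as folklore and does not spell it out, so your version is more detailed than what the author supplies; but there is no genuine difference in approach, just in explicitness.
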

\begin{lemma}\label{pdvalin}
The validity problem for $\PD$ is in $\NEXPTIME$.
\end{lemma}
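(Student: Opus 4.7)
The plan is to combine Lemma~\ref{bigteam} with the $\NP$ bound for $\PD$-model checking (Theorem~\ref{pdmcnp}). Given a $\PD$-formula $\varphi$, let $D$ be the set of proposition symbols occurring in $\varphi$, so $|D|\leq|\varphi|$. By Lemma~\ref{bigteam}, checking validity of $\varphi$ reduces to the single model-checking instance $X_{\max D}\models\varphi$. I would take this as the algorithm: first construct $X_{\max D}$, and then invoke the nondeterministic model-checking procedure of Theorem~\ref{pdmcnp} on the pair $(\varphi,X_{\max D})$.

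The running-time analysis is the heart of the argument, but it is essentially a bookkeeping step. The team $X_{\max D}$ has exactly $2^{|D|}$ assignments, each of bit-length $|D|$, so it can be written down in deterministic time $O(|D|\cdot 2^{|D|})=2^{O(|\varphi|)}$. Once it has been produced, the input to the model checker has size $|\varphi|+2^{O(|\varphi|)}$, and by Theorem~\ref{pdmcnp} the model checker runs in nondeterministic time polynomial in this input size, i.e.\ in $2^{O(|\varphi|)}$. The composition of an exponential-time preprocessing stage with a nondeterministic-polynomial-time checking stage on an exponentially large input yields an overall $\NEXPTIME$ procedure, which is the desired bound.

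I do not foresee a real obstacle here; the only subtlety is to be explicit that the NP algorithm of Theorem~\ref{pdmcnp} has its running time measured in the \emph{combined} size of the formula and the explicitly given team, so that feeding it an exponentially large team automatically produces a NEXPTIME upper bound rather than something larger. In particular, one must resist the temptation to represent $X_{\max D}$ implicitly, since then the NP bound on model checking would not be directly applicable; writing the team out in full is what makes the reduction work cleanly.
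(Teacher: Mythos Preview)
Your proposal is correct and follows essentially the same approach as the paper: reduce validity to the single instance $X_{\max D}\models\varphi$ via Lemma~\ref{bigteam}, construct $X_{\max D}$ explicitly in exponential time, and then apply the $\NP$ model-checking bound of Theorem~\ref{pdmcnp} to the exponential-size input. Your write-up is in fact slightly more careful than the paper's in spelling out the bit-size of $X_{\max D}$ and in emphasizing that the team must be represented explicitly for the $\NP$ bound to apply.
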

\begin{proof}
Let $\varphi$ be a $\PD$-formula. Let $D$ be the set of proposition symbols occurring in $\varphi$. Now, by Lemma \ref{bigteam}, $\varphi$ is valid if and only if $X_{\max D}\models \varphi$. The size of $X_{\max D}$ is $2^{\lVert D\rVert}$ and thus $\leq 2^{\lVert \varphi\rVert}$. Therefore $X_{\max D}$ can be clearly constructed from $\varphi$ in exponential time. By Theorem \ref{pdmcnp}, there exists an $\NP$ algorithm (with respect to $\lVert X_{\max D}\rVert + \lVert \varphi\rVert$) for checking whether $X_{\max D}\models \varphi$. Clearly this algorithm works in $\NEXPTIME$ with respect to the size of $\varphi$. Therefore, we conclude that the validity problem for $\PD$ is in $\NEXPTIME$.
\end{proof}
We will then show that the validity problem for $\PD$ is $\NEXPTIME$-hard. We give a reduction from $\TDQBF$ to the validity problem of $\PD$.
\begin{lemma}\label{pdvalhard}
The validity problem for $\PD$ is $\NEXPTIME$-hard.
\end{lemma}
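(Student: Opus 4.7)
The plan is to establish hardness via a polynomial-time reduction from $\TDQBF$, which is $\NEXPTIME$-complete by Theorem~\ref{TDQBF problem}. Given a DQBF $(\psi, \vec{P})$ with $\psi = \forall \alpha_1 \cdots \forall \alpha_n \exists \beta_1 \cdots \exists \beta_k \varphi$ and $\vec{P} = (P_1, \ldots, P_k)$, I treat each $\alpha_i$ and $\beta_j$ as a proposition symbol and associate the $\PD$-formula
$$\theta := \varphi \;\lor\; \bigvee_{i=1}^{k} \dep{P_i, \beta_i},$$
whose size is clearly polynomial in the input. The goal is to prove that $(\psi, \vec{P}) \in \TDQBF$ iff $\theta$ is valid. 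By Lemma~\ref{bigteam} validity of $\theta$ is equivalent to $X_{\max D} \models \theta$ for $D = \{\alpha_1, \ldots, \alpha_n, \beta_1, \ldots, \beta_k\}$, which unfolds via the semantics of $\lor$ to the existence of a $(k{+}1)$-fold split $X_{\max D} = Y \cup Z_1 \cup \cdots \cup Z_k$ with $Y \models \varphi$ and each $Z_i \models \dep{P_i, \beta_i}$.

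For the direction $(\psi, \vec{P}) \in \TDQBF \Rightarrow \theta$ valid, I take witness functions $f_1, \ldots, f_k$ and let $Y$ be the ``function graph'' consisting of the $2^n$ assignments $s \mapsto (s, f_1(s(P_1)), \ldots, f_k(s(P_k)))$, and let $Z_i := \{t \in X_{\max D} : t(\beta_i) \neq f_i(t(P_i))\}$. Each element of $Y$ satisfies $\varphi$ by the DQBF assumption, so $Y \models \varphi$ by Proposition~\ref{PLflat}; inside $Z_i$ the value of $\beta_i$ is the constant complement of $f_i(P_i)$ once $P_i$ is fixed, so $Z_i \models \dep{P_i, \beta_i}$ is immediate. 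Coverage follows because any $t \notin Y$ must disagree with its ``expected'' $\vec{\beta}$-value in some coordinate~$i$ and hence lies in $Z_i$.

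The converse direction carries the real content. Starting from any split witnessing $X_{\max D} \models \theta$, the atom $\dep{P_i, \beta_i}$ on $Z_i$ produces a partial function $g_i : \pi_{P_i}(Z_i) \to \{0,1\}$, and I define $f_i(\vec{v}) := 1 - g_i(\vec{v})$ wherever $g_i(\vec{v})$ is defined and $f_i(\vec{v}) := 0$ elsewhere. For any $\alpha$-assignment $s$, the extension $s^*$ given by $s^*(\alpha_j) = s(\alpha_j)$ and $s^*(\beta_i) = f_i(s(P_i))$ cannot lie in any $Z_i$: either $s(P_i) \in \pi_{P_i}(Z_i)$, in which case $s^*(\beta_i)$ would be forced by the dependence atom to equal $g_i(s(P_i))$, contradicting $f_i := 1 - g_i$; or $s(P_i) \notin \pi_{P_i}(Z_i)$, in which case $Z_i$ has no element with $P_i$-value $s(P_i)$ to begin with. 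Hence $s^* \in Y$ and $\varphi(s, \vec{f}(s))$ holds for every $s$, giving $(\psi, \vec{P}) \in \TDQBF$. The step I expect to demand most care is precisely this case analysis together with the arbitrary extension of $f_i$ outside $\pi_{P_i}(Z_i)$: this is the point at which the reduction guarantees that the synthesized $f_i$ is a total function of $P_i$ alone, matching the $P_i$-constraint of the DQBF.
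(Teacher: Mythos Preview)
Your reduction and formula are identical to the paper's, and your forward direction matches it exactly (the paper's $Y = X_{\max D_\mu} \setminus \bigcup_i Z_i$ is precisely your function graph). The only difference is in the converse: the paper picks the $\varphi$-satisfying part $Z$ to be \emph{minimal}, argues that minimality forces $Z$ itself to satisfy every $\dep{P_i,\beta_i}$, and then reads the $f_i$ directly off $Z$; you instead read a partial $g_i$ off each $Z_i$, set $f_i := 1 - g_i$ (extended arbitrarily), and show the resulting graph point $s^*$ is excluded from every $Z_i$ and hence lies in $Y$. Both arguments are correct; yours is arguably more direct since it avoids the minimality detour, while the paper's has the side benefit of exhibiting a single team $Z$ that simultaneously witnesses $\varphi$ and all the dependencies.
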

\begin{proof}
We will give a reduction from the truth problem of dependency quantified Boolean formulae to the validity problem of $\PD$. Since Boolean variables and proposition symbols in the context of $\PD$ are essentially the same, we will in this proof treat Boolean variables as proposition symbols, and vice versa.
Consequently, we may treat quantifier free Boolean formulae as formulae of propositional logic, and vice versa.

We will associate each $\DQBF$-formula $\mu$ with a corresponding $\PD$ formula $\varphi_\mu$. Let
\[
\mu=\big(\forall \alpha_1\dots \forall \alpha_n\exists \beta_1\dots\exists \beta_k\, \psi, (P_1,\dots,P_k)\big)
\]
be a $\DQBF$-formula. For each set of Boolean variables $P_i$, $i\leq k$, we stipulate that
\(
P_i=\{\alpha_{i_1},\dots,\alpha_{i_{n_i}}\}.
\)
We then denote by $D_\mu$ the set of Boolean variables in $\mu$, i.e., $D_\mu \dfn \{\alpha_1,\dots, \alpha_n,\beta_1,\dots,\beta_k\}$. Recall that we treat Boolean variables also as proposition symbols. Let
%
%By $\varphi_{(P_1,\dots,P_k)}$ we denote the $\PD$-formula
\[
\varphi_\mu \dfn \psi \vee \bigvee_{i\leq k} \dep{\alpha_{i_1},\dots,\alpha_{i_{n_i}}, \beta_i}.
\]
%Let $\varphi_\mu \dfn \varphi_{(P_1,\dots,P_k)}\vee \psi$.
We will show that $\mu$ is true (i.e., $\mu\in \TDQBF$) if and only if the corresponding $\PD$-formula $\varphi_\mu$ is valid. Since $\TDQBF$ is $\NEXPTIME$-complete and $\varphi_\mu$ is polynomial with respect to $\mu$, it follows that the validity problem for $\PD$ is $\NEXPTIME$-hard. By Lemma \ref{bigteam}, it is enough to show that $\mu$ is true if and only if $X_{\max D_\mu}\models \varphi_\mu$.

Assume first that $\mu$ is true, i.e., that $\forall \alpha_1\dots \forall \alpha_n\exists \beta_1\dots\exists \beta_k\, \psi$ is true under the constraint $(P_1,\dots,P_k)$. Therefore, for each $i\leq k$, there exists a function $f_i:\{0,1\}^{\lvert P_i\rvert}\to \{0,1\}$ such that
\begin{equation}\label{eq:1}
\text{for every assignment }s:\{\alpha_1, \dots, \alpha_n\}\to\{0,1\}:\quad s'\models \psi,
\end{equation}
where $s'$ is the modified assignment defined as follows: 
\[
s'(\alpha):=
\begin{cases}
f_i\big(s(P_i)\big) &\text{if } \alpha = \beta_i \text{ and } i\leq k,\\
s(\alpha) &\mbox{otherwise.}
\end{cases}
\]
Our goal is to show that
\[
X_{\max D_\mu}\models\psi \vee \bigvee_{i\leq k} \dep{\alpha_{i_1},\dots,\alpha_{i_{n_i}}, \beta_i}.
\]
It suffices to show that there exist some $Y,Z_1,\dots Z_k\subseteq X_{\max D_\mu}$ such that $Y\cup Z_1\cup\dots\cup Z_k = X_{\max D_\mu}$, $Y\models \psi$, and $Z_i\models\dep{\alpha_{i_1},\dots,\alpha_{i_{n_i}}, \beta_i}$, for each $i\leq k$.
We define the team $Z_i$, for each $i\leq k$, by using the function $f_i$.
We let
\(
Z_i \dfn \{s\in X_{\max D_\mu} \mid s(\beta_i)\neq f_i\big(s(\alpha_{i_1}),\dots, s(\alpha_{i_{n_i}})\big)\},
\)
for each $i\leq k$. Now, since Boolean variables have only $2$ possible values, we conclude that, for each $i\leq k$, $Z_i\models \dep{\alpha_{i_1},\dots,\alpha_{i_{n_i}}, \beta_i}$. Thus
\begin{equation}\label{eq:2}
\bigcup_{1\leq i\leq k} Z_i \models \bigvee_{i\leq k} \dep{\alpha_{i_1},\dots,\alpha_{i_{n_i}}, \beta_i}.
\end{equation}
Note that $s(\beta_i)= f_i\big(s(\alpha_{i_1}),\dots, s(\alpha_{i_{n_i}})\big)$ holds for every $s\in (X_{\max D_\mu}\setminus Z_i)$ and every $i\leq k$. 
Define then that
\[
Y\dfn X_{\max D_\mu}\setminus \bigcup_{1\leq i\leq k} Z_i.
\]
Clearly, for every $s\in Y$ and $i\leq k$, it holds that $s(\beta_i)= f_i\big(s(\alpha_{i_1}),\dots, s(\alpha_{i_{n_i}})\big)$. Thus from \eqref{eq:1}, it follows that $s\models \psi$, for every $s\in Y$. Since $\psi$ is a $\PL$ formula, we conclude by Proposition \ref{PLflat} that $Y\models \psi$. From this together with \eqref{eq:2}, we conclude that $X_{\max D_\mu}\models \varphi_\mu$.

Assume then that $X_{\max D_\mu}\models \varphi_\mu$. Therefore
\[
Y\models \bigvee_{i\leq k} \dep{\alpha_{i_1},\dots,\alpha_{i_{n_i}}, \beta_i}
\]
and $Z\models \psi$, for some $Y$ and $Z$ such that $Y\cup Z=X_{\max D_\mu}$. Hence there exist some $Y_1,\dots,Y_k,Z$ such that
\(
Y_1\cup\dots\cup Y_k\cup Z=X_{\max D_\mu},
\)
$Z\models \psi$, and $Y_i\models \dep{\alpha_{i_1},\dots, \alpha_{i_{n_i}},\beta_i}$ for each $i\leq k$. Assume that we have picked $Y_1,\dots, Y_k, Z$ such that $Z$ is minimal. We will show that then $Z\models \dep{\alpha_{i_1},\dots, \alpha_{i_{n_i}},\beta_i}$, for each $i\leq k$. Assume for the sake of contradiction that, for some $i\leq k$, there exist $s,t\in Z$ such that
\[
s(\alpha_{i_1})=t(\alpha_{i_1}), \dots, s(\alpha_{i_{n_i}})=t(\alpha_{i_{n_i}}) \text{ but } s(\beta_i)\neq t(\beta_i).
\]
Now clearly either $Y_i\cup\{s\}\models \dep{\alpha_{i_1},\dots, \alpha_{i_{n_i}},\beta_i}$ or $Y_i\cup\{t\}\models \dep{\alpha_{i_1},\dots, \alpha_{i_{n_i}},\beta_i}$. This contradicts the fact that $Z$ was assumed to be minimal.

We will then show that for every $a_1,\dots,a_n\in \{0,1\}$ there exists some assignment $s$ in $Z$ that expands
\[
(\alpha_1,\dots,\alpha_n)\mapsto (a_1, \dots, a_n).
\]
Let $a_1,\dots,a_n\in \{0,1\}$.
%
%Assume that there does not exists any assignment in $Z$ that expands
%\[
%(\alpha_1, \dots, \alpha_n) \mapsto (a_1, \dots, a_n).
%\]
%Hence each assignment $s': D_\mu\to \{0,1\}$ that expands $(\alpha_1, \dots, \alpha_n) \mapsto (a_1, \dots, a_n)$ is in $Y_i$, for some $i \leq k$. 
%
Now, for every $i\leq k$, since
$Y_i\models \dep{\alpha_{i_1},\dots, \alpha_{i_{n_i}},\beta_i},$
it follows that for any two $s',s''\in Y_i$ that expand $(\alpha_1, \dots, \alpha_n)\mapsto (a_1,\dots,a_n)$, it holds that $s'(\beta_i)=s''(\beta_i)$. Thus, for each $i\leq k$, there exists a truth value $b_i\in\{0,1\}$ such that there is no expansions of $(\alpha_1, \dots, \alpha_n,\beta_i)\mapsto (a_1,\dots,a_n,b_i)$ in $Y_i$. Therefore, the assignment $(\alpha_1, \dots, \alpha_n,\beta_1,\dots,\beta_k)\mapsto (a_1,\dots,a_n,b_1,\dots,b_k)$ is not in $Y_i$, for any $i\leq k$. Thus the assignment  $(\alpha_1, \dots, \alpha_n,\beta_1,\dots,\beta_k)\mapsto (a_1,\dots,a_n,b_1,\dots,b_k)$ is in $Z$.
Hence, for every $a_1, \dots, a_n\in \{0,1\}$, there exists some expansion of $(\alpha_1, \dots, \alpha_n) \mapsto (a_1, \dots, a_n)$ in $Z$.

Now, for each $i\leq k$, we define the function $f_i:\{0,1\}^{\lvert P_i\rvert}\to \{0,1\}$ as follows. We define that
\[
f_i(b_1,\dots,b_{\lvert P_i\rvert}) := s(\beta_i),
\]
where $s$ is an assignment in $Z$ that expands $(\alpha_{i_1},\dots\alpha_{i_{n_i}})\mapsto (b_1,\dots,b_{\lvert P_i\rvert})$. Since $Z\models \dep{\alpha_{i_1},\dots, \alpha_{i_{n_i}},\beta_i}$, for each $i\leq k$, the functions $f_i$ are well defined.
Now since $\psi$ is syntactically a $\PL$ formula and since $Z\models \psi$, it follows from proposition \ref{PLflat} that $s'\models \psi$, for each $s'\in Z$. Clearly the functions $f_i$, for $i\leq k$, are as required in \ref{eq:1}. Thus we conclude that \ref{eq:1} holds. Thus $\mu$ is true.

Now since the truth problem for $\DQBF$ is $\NEXPTIME$-hard and $\varphi_\mu$ is clearly polynomial with respect to $\mu$, we conclude that the validity problem for $\PD$ is $\NEXPTIME$-hard. 

\end{proof}

By Lemmas \ref{pdvalin} and \ref{pdvalhard}, we obtain the following:
\begin{theorem}\label{pdc}
The validity problem for $\PD$ is $\NEXPTIME$-complete.
\end{theorem}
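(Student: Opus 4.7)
The plan is simply to combine Lemmas \ref{pdvalin} and \ref{pdvalhard}: the former gives the $\NEXPTIME$ upper bound and the latter the matching lower bound, so no additional work is required beyond citing them. Still, it is worth pausing to summarise the two complementary ideas that drive the bounds, since the theorem is essentially the packaging of these two arguments.

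For the upper bound I would exploit downward closure (Proposition \ref{dcprop}): since the class of teams satisfying a $\PD$-formula $\varphi$ is closed under subsets, $\varphi$ is valid iff the single maximal team $X_{\max D}$ over its propositions satisfies it (Lemma \ref{bigteam}). That team has size at most $2^{\lVert\varphi\rVert}$, so it can be written down in exponential time; feeding the pair $(\varphi, X_{\max D})$ into the $\NP$ model-checker of Theorem \ref{pdmcnp} then yields a $\NEXPTIME$ procedure.

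For the lower bound I would reduce from $\TDQBF$, which is $\NEXPTIME$-hard by Theorem \ref{TDQBF problem}. The reduction sends a DQBF $\mu = (\forall\vec\alpha\,\exists\vec\beta\,\psi,(P_1,\dots,P_k))$ to the $\PD$-formula
\[
\varphi_\mu \dfn \psi \vee \bigvee_{i\leq k}\dep{P_i,\beta_i}.
\]
The key observation is that any split of the maximal team $X_{\max D_\mu}$ witnessing $\varphi_\mu$ corresponds exactly to a choice of Skolem functions $f_i$ for the $\beta_i$ respecting the constraint $P_i$. The main obstacle lies in the non-trivial direction of this equivalence: from a satisfying split one must \emph{extract} well-defined Skolem functions, which requires choosing the split minimally so that the ``$\psi$-part'' itself satisfies each dependence atom, and then arguing that every universal assignment $\vec a\in\{0,1\}^n$ admits an expansion in that $\psi$-part. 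Once this combinatorial point is settled, flatness of $\PL$ (Proposition \ref{PLflat}) converts satisfaction of $\psi$ on the $\psi$-part into satisfaction on each individual assignment, producing the required Skolem functions. This is precisely the argument carried out in Lemma \ref{pdvalhard}, and together with Lemma \ref{pdvalin} it delivers the theorem.
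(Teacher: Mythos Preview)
Your proposal is correct and follows exactly the paper's approach: the theorem is obtained directly by combining Lemmas \ref{pdvalin} and \ref{pdvalhard}, and your summaries of those two arguments (downward closure plus $\NP$ model checking for the upper bound, and the $\TDQBF$ reduction via $\varphi_\mu = \psi \vee \bigvee_{i\leq k}\dep{P_i,\beta_i}$ for the lower bound) accurately reflect the paper's proofs.
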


%
%%%%%%%%%%%%%%%%%%%%%%%%%%%%%%%%%%%%%%%%%%%%%%%%%%%%%%%%%%%%%%%%%%%%%%%%%%%
\section{The validity problem for modal dependence logic}\label{ccml}
%%%%%%%%%%%%%%%%%%%%%%%%%%%%%%%%%%%%%%%%%%%%%%%%%%%%%%%%%%%%%%%%%%%%%%%%%%%
%
%In this section we study the validity problem for modal dependence logic and extended modal dependence logic.
%
The satisfiability problem for both $\MDL$ and $\EMDL$ is known to be $\NEXPTIME$-complete. For $\MDL$ this was shown by Sevenster \cite{Sevenster:2009} and for $\EMDL$ Ebbing et~al. \cite{EHMMVV13}.
In Theorem \ref{pdc} we showed that the validity problem for $\PD$ is $\NEXPTIME$-complete and thus much more complex than the corresponding satisfiability problem. This together with the fact that the validity problem for modal logic is known to be $\PSPACE$-complete (Laddner\cite{la77}) seems to suggest $\EXPSPACE$ as a candidate for the complexity of the validity problem of $\MDL$ and $\EMDL$. However, we manage to do a bit better.
We establish that the validity problem of $\MDL$ and $\EMDL$ is in $\NEXPTIME^\NP$, i.e., in $\NEXPTIME$ with access to $\NP$ oracles. Thus we obtain that the precise complexity of these problems lie somewhere between and $\NEXPTIME$ and $\NEXPTIME^\NP$, since the $\NEXPTIME$-hardness follows directly from Lemma \ref{pdvalhard}.
\begin{corollary}
The validity problem for $\MDL$ and $\EMDL$ is $\NEXPTIME$-hard.
\end{corollary}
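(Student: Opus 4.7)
The plan is to observe that Lemma \ref{pdvalhard} already does essentially all of the work, and that all that remains is to exhibit a polynomial-time reduction from $\PD$-validity to $\MDL$-validity (and to $\EMDL$-validity). The reduction is the identity on formulas: every $\PD$-formula is syntactically an $\MDL$-formula (no modal operators occur), and every $\MDL$-formula is an $\EMDL$-formula. So it suffices to show that for a $\PD$-formula $\varphi$, $\varphi$ is $\PD$-valid if and only if $\varphi$ is valid as an $\MDL$-formula.

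The key semantic lemma I would prove first is the following. For any Kripke model $\mathrm{K} = (W,R,V)$ and any team $T\subseteq W$, define the induced propositional team
\[
X_T := \{\, s_w \mid w\in T\,\}, \qquad s_w(p) = 1 \iff w\in V(p).
\]
Then for every modal-operator-free formula $\varphi$ (i.e., every $\PD$-formula),
\[
\mathrm{K},T \models \varphi \iff X_T \models \varphi.
\]
This is an easy induction on $\varphi$: the literal and connective clauses read off verbatim in the two semantics once one passes through $X_T$, and for the dependence atom $\dep{p_1,\ldots,p_n,q}$ the modal and propositional clauses are literally the same quantifier pattern over $w,v\in T$ versus $s,t\in X_T$.

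Granted the lemma, the equivalence of the two notions of validity is immediate. If $\varphi$ is $\PD$-valid and $\mathrm{K},T$ is any Kripke model with team, then $X_T\models\varphi$ by $\PD$-validity, hence $\mathrm{K},T\models\varphi$. Conversely, suppose $\varphi$ is $\MDL$-valid and $X$ is an arbitrary propositional team whose domain contains the proposition symbols of $\varphi$. Build the Kripke model $\mathrm{K}_X := (X, \emptyset, V_X)$ with $V_X(p):=\{s\in X \mid s(p)=1\}$, and take $T:=X$; then $X_T = X$ by construction, so $\MDL$-validity gives $\mathrm{K}_X, X\models\varphi$, and the lemma yields $X\models\varphi$.

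There is no real obstacle here: the whole proof is just checking that for formulas without modal operators the $\MDL$ team semantics degenerate to the $\PD$ team semantics. Composing the identity reduction with Lemma \ref{pdvalhard} gives $\NEXPTIME$-hardness of $\MDL$-validity, and the same argument applies to $\EMDL$ by the syntactic inclusion $\MDL\subseteq \EMDL$.
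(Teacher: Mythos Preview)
Your proof is correct and takes the same approach as the paper, which simply records the corollary as following directly from Lemma~\ref{pdvalhard}; you have spelled out the justification the paper leaves implicit, namely that the identity map is a valid reduction because $\PD$-validity and $\MDL$-validity coincide on modal-operator-free formulas via the induced-team correspondence $T\mapsto X_T$. One tiny edge case: in the converse direction your model $\mathrm{K}_X=(X,\emptyset,V_X)$ is not a Kripke model when $X=\emptyset$, but this is harmless since the empty team satisfies every $\PD$-formula by Proposition~\ref{dcprop}.
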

The rest of this section is devoted on showing that the validity problem for $\EMDL$ is in $\NEXPTIME^\NP$.

Let $\varphi$ be a formula of $\EMDL$ or $\ML(\varovee)$. The set $\mathrm{nbSubf}(\varphi)$ of \emph{non-Boolean subformulas} of $\varphi$ is defined recursively as follows.
\begin{align*}
& \mathrm{nbSubf}(\neg p) := \mathrm{nbSubf}(p) := \{p\}, \quad \mathrm{nbSubf}(\triangle \varphi) := \{\triangle \varphi \} \cup \mathrm{nbSubf}(\varphi) \quad \text{for $\triangle \in \{\Diamond, \Box\}$},\\
& \mathrm{nbSubf}(\varphi \circ \psi) := \mathrm{nbSubf}(\varphi)\cup \mathrm{nbSubf}(\psi) \quad \text{for $\circ \in \{ \varovee, \vee, \wedge \}$},\\
%
%& \mathrm{nbSubf}(\triangle \varphi) := \{\triangle \varphi \} \cup \mathrm{nbSubf}(\varphi) \quad\quad \text{for $\nabla \in \{\Diamond, \Box\}$},\\
%
& \mathrm{nbSubf}\big(\dep{\varphi_1,\dots, \varphi_n,\psi}\big) := \mathrm{nbSubf}(\varphi_1)\cup\dots\cup \mathrm{nbSubf}(\varphi_n)\cup \mathrm{nbSubf}(\psi).
\end{align*}
The following lemma follows directly from \cite[Claim 15]{Sevenster:2009}.
\begin{lemma}\label{smallmodels}
Let $\varphi\in \ML$ and let $k=\lvert \mathrm{nbSubf}(\varphi)\rvert$.
Then, $\varphi$ is valid if and only if $\mathrm{K},w\models \varphi$ holds for every Kripke model $\mathrm{K}= (W,R,V)$ and $w\in W$ such that $\lvert W\rvert \leq 2^k$.
\end{lemma}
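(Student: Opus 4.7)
The plan is to prove the contrapositive of the nontrivial direction by a standard filtration construction, refined to work with $\mathrm{nbSubf}(\varphi)$ rather than all subformulas. The forward direction is immediate: if $\varphi$ is valid it holds on every model, in particular on models of size at most $2^k$.

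For the converse, suppose $\varphi$ is not valid. Since $\varphi\in\ML$, the flattening proposition for $\ML$ stated earlier in the preliminaries lets us replace team-semantical validity by ordinary pointed validity, so we may fix a Kripke model $\mathrm{K}=(W,R,V)$ and a world $w\in W$ with $\mathrm{K},w\not\models_{\ML}\varphi$. Define an equivalence $\sim$ on $W$ by declaring $v\sim v'$ iff $v$ and $v'$ agree on the truth of every formula in $\mathrm{nbSubf}(\varphi)$. Each $\sim$-class is uniquely determined by a subset of $\mathrm{nbSubf}(\varphi)$, so $|W/{\sim}|\le 2^k$.

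Now carry out a filtration: let $W':=W/{\sim}$, set $V'(p):=\{[v]:v\in V(p)\}$ for every $p\in\mathrm{nbSubf}(\varphi)$ (well defined since $\sim$ refines agreement on atoms), and take $R'$ to be, for instance, the smallest filtration, where $[v]R'[u]$ iff there exist $v_0\sim v$ and $u_0\sim u$ with $v_0Ru_0$. Write $\mathrm{K}':=(W',R',V')$. The key step is to prove by induction on the structure of $\psi$ that for every subformula $\psi$ of $\varphi$ and every $v\in W$,
\[
\mathrm{K},v\models_{\ML}\psi \iff \mathrm{K}',[v]\models_{\ML}\psi.
\]
The atomic case is by the definition of $V'$, the Boolean cases are direct from the induction hypothesis, and the modal cases use the two filtration properties ($vRu$ implies $[v]R'[u]$ in one direction; and for the other direction one chooses witnesses $v_0\sim v$, $u_0\sim u$ with $v_0Ru_0$ and transports truth of $\Diamond\chi$ or $\Box\chi$ across $\sim$ using that these formulas lie in $\mathrm{nbSubf}(\varphi)$). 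Applying this to $\psi=\varphi$ and $v=w$ yields $\mathrm{K}',[w]\not\models_{\ML}\varphi$ in a model with at most $2^k$ worlds.

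The main subtlety is the refinement from the usual bound $2^{|\mathrm{Subf}(\varphi)|}$ to $2^{|\mathrm{nbSubf}(\varphi)|}$: one must check that $\sim$-equivalent worlds agree not only on the non-Boolean subformulas but on \emph{every} subformula of $\varphi$, which follows because each subformula of an $\ML$-formula is built from elements of $\mathrm{nbSubf}(\varphi)$ by $\wedge$ and $\vee$ only. A further point worth verifying is that the preservation argument handles $\Box$ and $\Diamond$ separately, since our syntax is in negation normal form and the two operators are not interdefinable; this is why $\mathrm{nbSubf}(\varphi)$ is defined to contain the $\Box$- and $\Diamond$-subformulas themselves rather than only their scopes.
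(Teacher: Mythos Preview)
Your filtration argument is correct. The reduction from team-semantical validity to pointed validity via the flatness proposition is legitimate, the equivalence $\sim$ based on $\mathrm{nbSubf}(\varphi)$ has at most $2^k$ classes, and the truth lemma goes through for the smallest filtration exactly as you sketch: in the modal cases you correctly exploit that the $\Diamond$- and $\Box$-subformulas themselves lie in $\mathrm{nbSubf}(\varphi)$, so $\sim$-equivalent witnesses can transport their truth values. Your observation that agreement on $\mathrm{nbSubf}(\varphi)$ entails agreement on all subformulas (since every subformula is a Boolean combination of atoms and modal subformulas, and $\neg p$ is determined by $p$) is the right way to justify the sharper bound $2^{|\mathrm{nbSubf}(\varphi)|}$ in place of the textbook $2^{|\mathrm{Subf}(\varphi)|}$.

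The paper itself does not prove this lemma; it simply records that the statement follows directly from Claim~15 of Sevenster~\cite{Sevenster:2009}. So your self-contained filtration proof is not a different route so much as an explicit unpacking of what the cited result provides. Sevenster's argument is likewise a small-model construction (via a Hintikka-set/selection argument for $\MDL$ satisfiability, from which the $\ML$ bound can be read off), so the underlying mechanism is the same: quotient by agreement on the relevant non-Boolean pieces and verify preservation. Your version has the minor advantage of being tailored directly to validity and to the $\mathrm{nbSubf}$ measure used in this paper.
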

The following proposition for $\EMDL$ is based on a similar result for $\MDL$ that essentially combines the ideas of \cite{va08}, \cite{Sevenster:2009} and \cite{lohmannthesis}.
\begin{proposition}\label{EMDLtrans}
For every formula $\varphi\in \EMDL$ there exists an equivalent formula
\[
\varphi^* = \Idis_{i\in I}\varphi_i,
\]
where $I$ is a finite set of indices and $\varphi_i\in \ML$, for each $i\in I$. Furthermore, for each $i\in I$, the size of $\varphi_i$ is only exponential in the size of $\varphi$ and $\lvert\mathrm{nbSubf}(\varphi_i)\rvert \leq 3\times\lvert \varphi \rvert$.
\end{proposition}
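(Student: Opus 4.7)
The plan is to proceed in two stages: first, replace every modal dependence atom inside $\varphi$ by an equivalent $\ML(\varovee)$-formula, and then pull all occurrences of $\varovee$ outward using distributivity. This follows the blueprint for $\MDL$ referenced in the statement, now adapted so that arbitrary $\ML$-formulae may occur inside a dependence atom.

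For the first stage, given a dependence atom $\dep{\chi_1, \dots, \chi_n, \zeta}$ with each $\chi_i, \zeta \in \ML$, I would verify the equivalence with
\[
\Idis_{f \colon \{0,1\}^n \to \{0,1\}} \theta_f, \qquad \theta_f \dfn \bigvee_{\vec b \in \{0,1\}^n} \Bigl(\bigwedge_{i=1}^n \chi_i^{b_i} \wedge \zeta^{f(\vec b)}\Bigr),
\]
where $\chi^1 \dfn \chi$ and $\chi^0$ is the NNF-dual of $\chi$. Since each $\theta_f$ is a pure $\ML$-formula, flatness of $\ML$ makes $\theta_f$ hold on a team $T$ precisely when $\zeta(w) = f(\chi_1(w),\dots,\chi_n(w))$ at every $w \in T$; the existence of a suitable witness $f$ is then realized by the outer intuitionistic disjunction. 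For the second stage, I would establish the distributive laws
\[
(\alpha \varovee \beta) \circ \gamma \equiv (\alpha \circ \gamma) \varovee (\beta \circ \gamma) \;\text{for } \circ \in \{\wedge, \vee\}, \quad \triangle(\alpha \varovee \beta) \equiv \triangle\alpha \varovee \triangle\beta \;\text{for } \triangle \in \{\Diamond, \Box\},
\]
by direct inspection of the team semantics; the modal cases rely on the fact that the witness team in the clause for $\Diamond$ (and the forced team $R[T]$ in the clause for $\Box$) can be made to satisfy whichever disjunct of $\varovee$ is chosen. A structural induction on $\varphi$ then yields $\varphi^* = \Idis_{i \in I} \varphi_i$ with each $\varphi_i \in \ML$.

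Each $\varphi_i$ corresponds to fixing one witness function $f$ per dependence atom of $\varphi$, so it has exactly the outer shape of $\varphi$ with each $\dep{\chi_1,\dots,\chi_n,\zeta}$ replaced by some $\theta_f$ of size $O(2^n \cdot \lvert\dep{\chi_1,\dots,\chi_n,\zeta}\rvert)$, which is exponential in $\lvert\varphi\rvert$. For the bound $\lvert\mathrm{nbSubf}(\varphi_i)\rvert \leq 3\lvert\varphi\rvert$ I would prove by induction on $\varphi$ that every disjunct in the translation satisfies it: at a dependence atom, $\theta_f$ contributes only the non-Boolean subformulae of the $\chi_i$, of $\zeta$, and of their NNF-duals, amounting to at most $2\bigl(\sum_i\lvert\chi_i\rvert + \lvert\zeta\rvert\bigr) \leq 2\lvert\dep{\cdot}\rvert$; the compositional steps for $\wedge$ and $\vee$ are additive, while those for $\Diamond$ and $\Box$ contribute a single extra non-Boolean subformula, and in every case the inductive bound of $3\lvert\varphi\rvert$ is preserved. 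The main obstacle is the bookkeeping in this last induction: the inclusion of NNF-duals at dependence atoms already doubles the base-case count, and the modal step $1 + 3\lvert\psi\rvert \leq 3\lvert\triangle\psi\rvert$ is precisely what pins the constant at $3$.
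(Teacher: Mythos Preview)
Your proposal is correct and follows essentially the same two-stage plan as the paper: translate each dependence atom into $\ML(\varovee)$ (the paper writes this as $\bigvee_{\vec a}\bigl(\bigwedge_i\chi_i^{a_i}\wedge(\zeta\varovee\zeta^\bot)\bigr)$ with a local $\varovee$ per type rather than your explicit outer $\Idis_f$, but after pulling $\varovee$ out the resulting $\ML$-disjuncts coincide), and then distribute $\varovee$ past $\wedge,\vee,\Diamond,\Box$. Your structural induction for the $\mathrm{nbSubf}$ bound is slightly more direct than the paper's global counting argument, but both arrive at the constant $3$ for the same reason (doubling from the NNF-duals plus one extra per modal operator).
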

\begin{proof}
We will first recall an exponential translation $\varphi\mapsto\varphi^+$ from $\EMDL$ to $\ML(\varovee)$ given in \cite[Theorem 2]{EHMMVV13}. The cases for proposition symbols, Boolean connectives and modalities are trivial, i.e.,
\(
p \mapsto p, \neg p \mapsto \neg p, (\varphi\wedge\psi) \mapsto (\varphi^+\wedge\psi^+), (\varphi\vee\psi) \mapsto (\varphi^+\vee\psi^+), \Diamond\varphi \mapsto \Diamond\varphi^+, \Box\varphi \mapsto \Box\varphi^+.
%(\varphi\wedge\psi) &\quad\mapsto\quad (\varphi^+\wedge\psi^+)\\
%(\varphi\vee\psi) &\quad\mapsto\quad (\varphi^+\vee\psi^+)\\
%\Diamond\varphi &\quad\mapsto\quad \Diamond\varphi^+\\
%\Box\varphi &\quad\mapsto\quad \Box\varphi^+.
\)
The only interesting case is the case for the dependence atom. We define that
\[
\dep{\varphi_1,\dots,\varphi_n,\psi} \quad\mapsto\quad \bigvee_{a_1,\dots,a_n\in \{\bot,\top\}}\big(\bigwedge_{i\leq n} \varphi_i^{a_i}\wedge(\psi\varovee\psi^\bot)\big), 
\]
where $\varphi^\top$ denotes $\varphi$ and $\varphi^\bot$ denotes the $\ML$ formula obtained from $\neg\varphi$ by pulling all negations to the atomic level. Notice that the size of $\varphi^+$ is $\leq c\times \lvert\varphi\rvert\times 2^{\lvert\varphi\rvert}$, for some constant $c$. Thus the size of $\varphi^+$ is at most exponential with respect to the size of $\varphi$. From $\varphi^+$ it is easy to obtain an equivalent $\ML(\varovee)$-formula $\varphi^*$ of the form
\[
\Idis_{i\in I}\varphi_i,
\]
where $I$ is a finite index set and $\varphi_i$, for $i\in I$, is an $\ML$-formula. Let $F$ be the set of all selection functions $f$ that select, separately for each occurrence, either the left disjunct $\psi$ or the right disjunct $\theta$ of each subformula of the form $(\psi\varovee\theta)$ of $\varphi^+$. Now let $\varphi^+_f$ denote the formula obtained from $\varphi^+$ by substituting each occurrence of a subformula of type $(\psi\varovee\theta)$ by $f\big((\psi\varovee\theta)\big)$. We then define that
\[
\varphi^*\dfn \Idis_{f\in F} \varphi^+_f.
\]
It is straightforward to prove that $\varphi^*$ is equivalent to $\varphi^+$ and hence to $\varphi$. 
Since, for each $f\in F$, $\varphi^+_f$ is obtained from $\varphi^+$ by substituting subformulae of type $(\psi\varovee\theta)$ with either $\psi$ or $\theta$, it is clear that the size of $\varphi^+_f$ is bounded above by the size of $\varphi^+$. Recall that the size of $\varphi^+$ is at most exponential with respect to the size of $\varphi$. 
Therefore, for each $f\in F$, the size of $\varphi^+_f$ is at most exponential with respect to the size of $\varphi$.

We say that the modal operator $\Diamond$ in $\Diamond \theta$ \emph{dominates} an intuitionistic disjunction if $\varovee$ occurs in $\theta$.
To see that $\vert\mathrm{nbSubf}(\varphi^+_f)\rvert\leq 3\times\lvert \varphi \rvert$, for each $f\in F$, notice first that in the translation $\varphi \mapsto \varphi^+$ the only case that can increase the number of non-Boolean subformulae is the case for the dependence atom. Each $\varphi_i^{\bot}$ and $\psi^{\bot}$ may introduce new non-Boolean subformulae. Thus it is straightforward to see that $\lvert \mathrm{nbSubf}(\varphi^+)\rvert \leq 2\times\lvert \mathrm{nbSubf}(\varphi)\rvert$. 
Furthermore, notice that the number of modal operators that dominate an intuitionistic disjunction in $\varphi^+$ is less or equal to the number of modal operators in $\varphi$. Let $k$ denote the number of modal operators in $\varphi$. It is easy to see that $\lvert\mathrm{nbSubf}(\varphi^+_f)\rvert\leq \lvert \mathrm{nbSubf}(\varphi^+) \rvert + k$, for each $f\in F$. Now since $k\leq \lvert \varphi \rvert$ and $\lvert \mathrm{nbSubf}(\varphi)\rvert \leq \lvert \varphi \rvert$, we obtain that $\lvert\mathrm{nbSubf}(\varphi^+_f)\rvert\leq 3\times\lvert \varphi \rvert$, for each $f\in F$. With a more careful bookkeeping, we would obtain that $\mathrm{nbSubf}(\varphi^+_f)\leq 2\times\lvert \varphi \rvert$.
\end{proof}
We say that a formula $\varphi\in \ML$ is \emph{valid in small models} if $\mathrm{K},w\models \varphi$ holds for every Kripke model $\mathrm{K}=(W,R,V)$ and $w\in W$ such that $\lvert W \rvert \leq \lvert \varphi \rvert$.
\begin{lemma}
The decision problem whether a given formula of $\ML$ is valid in small models is in $co\NP$.
\end{lemma}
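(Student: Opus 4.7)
The plan is to show that the \emph{complement} of the problem---given $\varphi \in \ML$, decide whether there exists a Kripke model $\mathrm{K} = (W, R, V)$ with $\lvert W \rvert \leq \lvert \varphi \rvert$ and some $w \in W$ such that $\mathrm{K}, w \not\models_\ML \varphi$---is in $\NP$. Once this is established, the original small-model validity problem is in $co\NP$ by definition.

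The $\NP$ algorithm for the complement proceeds as follows. First observe that any counterexample pair $(\mathrm{K}, w)$ admits an encoding of size polynomial in $\lvert \varphi \rvert$: the carrier $W$ has at most $\lvert \varphi \rvert$ worlds, the relation $R$ thus consists of at most $\lvert \varphi \rvert^2$ pairs, and it suffices for the valuation $V$ to specify the truth values of the (at most $\lvert \varphi \rvert$) proposition symbols actually occurring in $\varphi$ at each world. We nondeterministically guess such a tuple $(\mathrm{K}, w)$ and then verify, in deterministic polynomial time, that $\mathrm{K}, w \not\models_\ML \varphi$.

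The verification step is the standard pointed-Kripke model checking problem for $\ML$, which is well known to be solvable in deterministic polynomial time, for instance by the classical bottom-up labeling procedure that computes, for each subformula $\psi$ of $\varphi$ in order of increasing complexity, the set $\{v \in W \mid \mathrm{K}, v \models_\ML \psi\}$. Since this labeling runs in time polynomial in $\lvert \mathrm{K} \rvert + \lvert \varphi \rvert$ and our guessed $\mathrm{K}$ has polynomial size, the whole algorithm runs in nondeterministic polynomial time.

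There is no substantive obstacle here: the only points to verify are that the small-model restriction $\lvert W \rvert \leq \lvert \varphi \rvert$ bounds the guess to polynomial size (immediate from the definition) and that ordinary $\ML$ model checking is in $\P$ (textbook). Combining these gives the complement in $\NP$, and hence the small-model validity problem for $\ML$ is in $co\NP$.
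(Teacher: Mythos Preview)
Your proof is correct and follows essentially the same approach as the paper: both show the complement is in $\NP$ by guessing a polynomial-size pointed Kripke model (guaranteed by the bound $\lvert W\rvert \leq \lvert\varphi\rvert$) and then invoking the $\PTIME$ model checking for $\ML$ to verify $\mathrm{K},w\not\models\varphi$. Your version is slightly more explicit about the size of the encoding and the labeling algorithm, but the argument is the same.
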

\begin{proof}
If a formula $\varphi\in \ML$ is not valid in small models, then there is some $k\leq \lvert \varphi\rvert$ and a pointed Kripke model $\mathrm{K},w$ of size $k$ such that $\mathrm{K},w\not\models \varphi$. The size of $\mathrm{K},w$ is clearly polynomial in $\lvert \varphi\rvert$, and thus it can be guessed nondeterministically in polynomial time with respect to $\lvert \varphi\rvert$. The model checking problem for modal logic is in $\PTIME$ (\cite{clarke86}), and thus $\mathrm{K},w\not\models \varphi$ can be verified in polynomial time with respect to $\lvert \mathrm{K}\rvert + \lvert \varphi \rvert$ and thus in polynomial time with respect to $\lvert \varphi \rvert$.
\end{proof}
\begin{proposition}\label{disjunctionprop}
$\ML(\varovee)$ has the $\varovee$-disjunction property, i.e.,
for every $\varphi,\psi\in \ML(\varovee)$ it holds that
$(\varphi\varovee\psi)$ is valid if and only if either $\varphi$ is valid or $\psi$ is valid.
\end{proposition}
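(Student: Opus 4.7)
The easy direction is immediate from the semantics of $\varovee$: if either $\varphi$ or $\psi$ is valid, then for any Kripke model $\mathrm{K}$ and any team $T$ we have $\mathrm{K}, T \models \varphi$ or $\mathrm{K}, T \models \psi$, so $\mathrm{K}, T \models (\varphi \varovee \psi)$. I would dispose of this direction in one sentence.

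For the nontrivial direction, I would argue by contraposition: assuming that neither $\varphi$ nor $\psi$ is valid, I aim to construct a single pointed Kripke model that refutes $(\varphi \varovee \psi)$. By the assumption, there exist a Kripke model $\mathrm{K}_1$ and a team $T_1$ of $\mathrm{K}_1$ with $\mathrm{K}_1, T_1 \not\models \varphi$, and similarly $\mathrm{K}_2, T_2$ with $\mathrm{K}_2, T_2 \not\models \psi$. (If necessary, trivially extend the valuations so that every proposition symbol of $\varphi$ and $\psi$ is mapped by both models; this clearly preserves satisfaction.) The natural move is to pass to the disjoint union $\mathrm{K} := \mathrm{K}_1 \uplus \mathrm{K}_2$ and treat $T_1, T_2$ as disjoint teams of $\mathrm{K}$.

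Now the two ingredients do the work. First, Corollary \ref{disjoint unions} gives $\mathrm{K}, T_1 \not\models \varphi$ and $\mathrm{K}, T_2 \not\models \psi$, since truth of $\ML(\varovee)$-formulae is preserved under disjoint unions. Second, downward closure (Proposition \ref{dcml}) in its contrapositive form allows enlarging the teams: since $T_1 \subseteq T_1 \cup T_2$, the failure $\mathrm{K}, T_1 \not\models \varphi$ forces $\mathrm{K}, T_1 \cup T_2 \not\models \varphi$, and symmetrically $\mathrm{K}, T_1 \cup T_2 \not\models \psi$. By the semantic clause for $\varovee$, this yields $\mathrm{K}, T_1 \cup T_2 \not\models (\varphi \varovee \psi)$, contradicting the validity of the disjunction.

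The proof is therefore essentially a two-step maneuver: move the two separate counterexamples into a common model via the disjoint union property, then merge the two separate teams into a common team via downward closure. The only conceptual obstacle I can see is recognizing that neither property alone suffices; without the disjoint union one cannot use downward closure to compare teams in different models, and without downward closure the two counterexamples would still live on disjoint teams and could not be forced onto a single one. Everything else is routine, and the resulting argument is short.
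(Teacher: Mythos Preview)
Your proposal is correct and follows essentially the same argument as the paper: both proofs dismiss the right-to-left direction as trivial and handle the left-to-right direction by taking counterexamples $(\mathrm{K}_1,T_1)$ and $(\mathrm{K}_2,T_2)$, passing to the disjoint union via Corollary~\ref{disjoint unions}, and then enlarging to the common team $T_1\cup T_2$ via the contrapositive of downward closure (Proposition~\ref{dcml}). The only differences are cosmetic notation and your added commentary on why both ingredients are needed.
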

\begin{proof}
%The result is essentially shown in \cite{fanthesis}. The proof is very simple and it relies on the fact that $\ML(\varovee)$ is downward closed and the fact that truth of $\ML(\varovee)$-formulae is preserved under taking disjoint unions, i.e., Proposition  \ref{dcml} and Corollary \ref{disjoint unions}.
%
The direction from right to left is trivial. We will prove here the direction form left to right. Assume that $(\varphi\varovee\psi)$ is valid. For the sake of contradiction, assume then that neither $\varphi$ nor $\psi$ is valid. Thus there exist Kripke models $\mathrm{K}$ and $\mathrm{K'}$, and teams $T$ and $T'$ of $\mathrm{K}$ and $\mathrm{K'}$, respectively, such that $\mathrm{K},T\not\models \varphi$ and $\mathrm{K'},T'\not\models \psi$. From Corollary \ref{disjoint unions} it follows that $\mathrm{K}\uplus\mathrm{K'},T\not\models \varphi$ and $\mathrm{K}\uplus\mathrm{K'},T'\not\models \psi$, where $\mathrm{K}\uplus\mathrm{K'}$ denotes the disjoint union of $\mathrm{K}$ and $\mathrm{K'}$. Since the formulae of $\ML(\varovee)$ are downwards closed (Proposition  \ref{dcml}), we conclude that $\mathrm{K}\uplus\mathrm{K'},T\cup T'\not\models \varphi$ and $\mathrm{K}\uplus\mathrm{K'},T\cup T'\not\models \psi$. Thus $\mathrm{K}\uplus\mathrm{K'},T\cup T'\not\models (\varphi\varovee\psi)$. This contradicts the fact that $(\varphi\varovee\psi)$ is valid.
\end{proof}
%
%We are now ready to show that the validity problem for $\EMDL$ is in $\NEXPTIME^\NP$.
\begin{proposition}\label{valemdl}
The validity problem for $\EMDL$ is in $\NEXPTIME^\NP$.
\end{proposition}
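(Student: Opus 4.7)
The plan is to reduce validity of $\varphi \in \EMDL$ to validity of one of its $\ML$-disjuncts, and then to delegate the modal validity check to an $\NP$-oracle via the small-model property.

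First I would invoke Proposition~\ref{EMDLtrans} to pass from $\varphi$ to an equivalent formula $\varphi^{*}=\Idis_{i\in I}\varphi_{i}$ whose disjuncts $\varphi_{i}\in\ML$ have size at most exponential in $\lvert\varphi\rvert$ and satisfy $\lvert\mathrm{nbSubf}(\varphi_{i})\rvert\le 3\lvert\varphi\rvert$. Combining this with the disjunction property (Proposition~\ref{disjunctionprop}), $\varphi$ is valid if and only if some single $\varphi_{i}$ is valid. This suggests a natural $\NEXPTIME^{\NP}$ strategy: in exponential time, deterministically build the intermediate formula $\varphi^{+}$ and nondeterministically guess a selection function $f$ that picks a disjunct at each occurrence of $\varovee$ in $\varphi^{+}$, yielding an explicit $\ML$-formula $\varphi_{i}=\varphi^{+}_{f}$ of exponential size; then issue one $\NP$-oracle query asking whether $\varphi_{i}$ is \emph{not} valid, and accept iff the oracle returns ``no''.

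The key step requiring justification is that non-validity of a given $\varphi_{i}\in\ML$ is genuinely an $\NP$-question in $\lvert\varphi_{i}\rvert$. Here Lemma~\ref{smallmodels} is decisive: $\varphi_{i}$ is not valid iff it has a pointed counter-model $\mathrm{K},w$ with $\lvert W\rvert\le 2^{\lvert\mathrm{nbSubf}(\varphi_{i})\rvert}\le 2^{3\lvert\varphi\rvert}$. Since $\lvert\varphi_{i}\rvert$ is itself exponential in $\lvert\varphi\rvert$, the bound $2^{3\lvert\varphi\rvert}$ is polynomial in $\lvert\varphi_{i}\rvert$, so such a counter-model can be guessed in nondeterministic polynomial time in $\lvert\varphi_{i}\rvert$ and the check $\mathrm{K},w\not\models\varphi_{i}$ verified by the standard polynomial-time model checking for $\ML$.

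The main obstacle I expect is precisely this alignment of parameters: the disjunct $\varphi_{i}$ blows up exponentially, yet the small-model bound must remain polynomial in $\lvert\varphi_{i}\rvert$ for the oracle query to lie in $\NP$. This is exactly where the finer bound $\lvert\mathrm{nbSubf}(\varphi_{i})\rvert\le 3\lvert\varphi\rvert$ from Proposition~\ref{EMDLtrans} is essential; without it the counter-model size would be exponential in $\lvert\varphi_{i}\rvert$ and the oracle phase would degenerate. Everything else is routine bookkeeping combining Propositions~\ref{EMDLtrans} and~\ref{disjunctionprop} with Lemma~\ref{smallmodels} to assemble the overall $\NEXPTIME^{\NP}$ bound.
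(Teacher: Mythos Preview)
Your overall strategy coincides with the paper's: reduce validity of $\varphi$ to validity of a single $\ML$-disjunct $\varphi_i$ (via Propositions~\ref{EMDLtrans} and~\ref{disjunctionprop}), guess such a disjunct in exponential time, and delegate the check on $\varphi_i$ to an $\NP$-oracle using the small-model bound from Lemma~\ref{smallmodels}.

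There is, however, a real gap in the oracle phase. You propose to query ``is $\varphi_i$ not valid?'' and justify that this is an $\NP$-question because ``$\lvert\varphi_i\rvert$ is itself exponential in $\lvert\varphi\rvert$'', so the counter-model bound $2^{3\lvert\varphi\rvert}$ is polynomial in $\lvert\varphi_i\rvert$. But Proposition~\ref{EMDLtrans} only says $\lvert\varphi_i\rvert$ is \emph{at most} exponential in $\lvert\varphi\rvert$; it may be far smaller (e.g.\ if $\varphi$ contains no dependence atoms then $\varphi^{+}=\varphi$ and the sole disjunct is $\varphi$ itself). In that case $2^{3\lvert\varphi\rvert}$ is not polynomial in $\lvert\varphi_i\rvert$. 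More fundamentally, the oracle language must be a \emph{fixed} $\NP$-language, and plain $\ML$ non-validity is $\PSPACE$-complete, so it cannot serve as the oracle; your argument only shows that non-validity happens to be easy \emph{for these particular inputs}, which is not the same thing.

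The paper closes this gap with a padding trick. It fixes the oracle to the $co\NP$ problem ``valid in small models'' (models of size at most the length of the input formula), and then pads the chosen disjunct to
\(
\psi^{-}:=\big(\bigwedge_{j\le 2^{3\lvert\varphi\rvert}}(p\vee\neg p)\big)\wedge\varphi_i
\)
before querying. This forces $\lvert\psi^{-}\rvert\ge 2^{3\lvert\varphi\rvert}$ regardless of how small $\varphi_i$ is, so ``small models'' for $\psi^{-}$ already covers all models of size $\le 2^{3\lvert\varphi\rvert}$, which by Lemma~\ref{smallmodels} and the bound $\lvert\mathrm{nbSubf}(\varphi_i)\rvert\le 3\lvert\varphi\rvert$ suffices to decide validity of $\varphi_i$. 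Your argument becomes correct once you insert this device (or, equivalently, pass a unary size bound $1^{2^{3\lvert\varphi\rvert}}$ alongside $\varphi_i$ to a suitably defined $\NP$ oracle).
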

\begin{proof}
For deciding whether a given $\EMDL$ formula is valid, we give a nondeterministic exponential time algorithm that has an access to an $\NP$ oracle that decides whether a given $\ML$ formula is valid in small models.
For each $\varphi\in\EMDL$ let $\varphi^+$ denote the equivalent exponential size $\ML(\varovee)$-formula from the proof of Proposition \ref{EMDLtrans}. Clearly $\varphi^+$ is computable from $\varphi$ in exponential time. Furthermore let $\varphi^*$ denote the $\ML(\varovee)$-formula of the form
\(
\Idis_{f\in F}\varphi_f^+
\)
of Proposition \ref{EMDLtrans}. Moreover let $g:\mathbb{N}\to\mathbb{N}$ be some exponential function such that $\lvert\varphi_f^+\rvert \leq g(\lvert\varphi\rvert)$, for every $\varphi\in\EMDL$ and $f\in F$. By Proposition \ref{EMDLtrans} there exists such a function.
%
%Furthermore define that
%\[
%\varphi^-_f := \big(\bigwedge_{i\leq 2^{\lvert\varphi\rvert}} (p\vee \neg p) \big)\wedge \varphi_f^*. 
%\]
%

%Note that while $\varphi^+$ might not be computable from $\varphi$ in exponential time, given an exponential size formula $\psi$ (with respect to $\varphi$) it is computable in polynomial time with respect to the combined size of $\varphi^*$ and $\psi$ (and hence in exponential time with respect to the size of $\varphi$) whether $\psi$ is among the disjuncts $\varphi^*_f$, $f\in F$, of $\varphi^+$.

We are now ready to give a $\NEXPTIME^\NP$ algorithm for the validity problem of $\EMDL$. Let $\varphi$ be an $\EMDL$ formula. First guess nondeterministically an $\ML$ formula $\psi$ of the same vocabulary as $\varphi$ of size at most $g(\lvert\varphi\rvert)$. Then compute $\varphi^+$ from $\varphi$ and check whether $\psi$ is among the disjuncts $\varphi^+_f$, $f\in F$, of $\varphi^*$.
Clearly the checking can be done in polynomial time with respect to $\lvert\varphi^+\rvert + \lvert\psi\rvert$ and thus in exponential time with respect to the size of $\varphi$.
If $\psi$ is not among the disjuncts the algorithm outputs ``No'', otherwise the algorithm continues. 
We then give
\[
\psi^-:=\big(\bigwedge_{i\leq 2^{3\times \lvert\varphi\rvert}} (p\vee \neg p) \big)\wedge\psi
\]
as an input to an $\NP$ oracle that decides whether the $\ML$ formula $\psi^-$ is valid in small models. Clearly $\psi^-$ is computable from $\psi$ in exponential time with respect to the size of $\varphi$. The algorithm outputs ``No'' if the oracle outputs ``No'' and ``Yes'' if the oracle outputs ``Yes''. Clearly this algorithm is in $\NEXPTIME^\NP$.

Now by Proposition \ref{EMDLtrans}, $\varphi$ is valid if and only if $\varphi^*$ is valid, and furthermore, by Proposition \ref{disjunctionprop}, $\varphi^*$ is valid if and only if $\varphi^+_f$ is valid for some $f\in F$. By Proposition \ref{EMDLtrans}, $\lvert \mathrm{nbSubf}(\varphi^+_f)\rvert \leq 3\times \lvert \varphi \rvert$, for every $f\in F$. Thus by Lemma \ref{smallmodels}, for every $f\in F$, $\varphi^+_f$ is valid if and only if $\varphi^+_f$ is true on all pointed models of size at most $2^{3\times\lvert \varphi \rvert}$.
Now clearly, for every $f\in F$, $\varphi^+_f$ is valid if and only if the formula
\[
\varphi^-_f :=\big(\bigwedge_{i\leq 2^{3\times\lvert\varphi\rvert}} (p\vee \neg p) \big)\wedge\varphi^+_f
\]
is valid.
Thus, for every $f\in F$, $\varphi^+_f$ is valid if and only if $\varphi^-_f$ is valid in small models.
Therefore, and since $\psi^-=\varphi^-_f$, for some $f\in F$, the algorithm decides the validity problem of $\EMDL$.
\end{proof}
%As a direct corollary we obtain the corresponding result for modal dependence logic.
\begin{corollary}
The validity problem for $\MDL$ is in $\NEXPTIME^\NP$.
\end{corollary}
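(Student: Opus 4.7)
The plan is to observe that $\MDL$ is both syntactically and semantically a fragment of $\EMDL$, so that the stated upper bound is inherited from Proposition \ref{valemdl}. Concretely, every propositional dependence atom $\dep{p_1,\dots,p_n,q}$ occurring in an $\MDL$-formula is, read literally, a modal dependence atom of $\EMDL$ in which the components $p_1,\dots,p_n,q$ happen to be atomic $\ML$-formulae. Moreover, for an atomic proposition $p$ and a world $w$, the $\ML$-satisfaction condition $\mathrm{K},\{w\}\models p$ reduces to $w\in V(p)$, so the $\EMDL$-semantics of the dependence atom $\dep{p_1,\dots,p_n,q}$ agrees, on any Kripke model and team, with the $\MDL$-semantics of the same atom. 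Consequently, an $\MDL$-formula is valid in the $\MDL$ sense if and only if it is valid when regarded as an $\EMDL$-formula.

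Therefore the identity map is a (trivially) polynomial-time reduction from the validity problem of $\MDL$ to that of $\EMDL$, and Proposition \ref{valemdl} immediately yields the stated $\NEXPTIME^\NP$ upper bound. There is essentially no obstacle: the only thing one needs to verify is the semantic agreement of the two kinds of dependence atoms on atomic propositions, and this is built into the definitions of $\MDL$ and $\EMDL$ given in Section \ref{preli}.
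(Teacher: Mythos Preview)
Your argument is correct and matches the paper's intended reasoning: the corollary is stated immediately after Proposition~\ref{valemdl} with no proof, precisely because $\MDL$ is a syntactic and semantic fragment of $\EMDL$, so the $\NEXPTIME^\NP$ upper bound transfers via the identity map. You have simply made explicit the one-line observation the paper leaves implicit.
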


\section{Conclusion}
In this article we studied the validity problem of propositional dependence logic, modal dependence logic, and extended modal dependence logic. We established that the validity problem for propositional dependence logic is $\NEXPTIME$-complete. In addition we showed that the corresponding problem for modal dependence logic and extended modal dependence logic is $\NEXPTIME$-hard and contained in $\NEXPTIME^\NP$. The exact complexity of the validity problem for $\MDL$ and $\EMDL$ remain open. We conjecture that both of these problems are harder than $\NEXPTIME$. We also believe that the complexity of $\MDL$ and $\EMDL$ coincide. In addition to resolving the precise complexity of the validity problem of $\MDL$ and $\EMDL$, we are interested in the complexity of the entailment problem of $\PD$, $\MDL$, and $\EMDL$. Note that in the context of dependence logic the entailment problem cannot be reduced directly to the validity problem. However the validity problem can be reduced to the entailment problem. Hence the entailment problem of $\PD$, $\MDL$, and $\EMDL$ is at least as hard as the corresponding validity problem.
%
%We conclude with some open problems.
%\begin{itemize}
%\item What is the precise complexity of the validity problem for $\MDL$ and $\EMDL$.
%\item In particular, is the validity problem for $\MDL$ and $\EMDL$ $\NEXPTIME^\NP$-complete.
%\item What is the complexity of the entailment problem for $\PD$, $\MDL$, and $\EMDL$.
%\end{itemize}

\section*{Acknowledgements}
The author would like to thank the anonymous reviewers for their detailed and constructive comments. The author would also like to thank the Academy of Finland (grant 266260) and the Finnish Academy of Science and Letters for financial support.

\bibliographystyle{eptcs}

%\label{LastPage}
\end{document}